\newcommand\AC{\ensuremath{\mathcal{A}}}
\newcommand\DC{\ensuremath{\mathcal{D}}}
\newcommand\MC{\ensuremath{\mathcal{M}}}
\newcommand\SC{\ensuremath{\mathcal{S}}}
\newcommand\xB{\ensuremath{\bm{x}}}
\newcommand\PsiB{\ensuremath{{\bm \Psi}}}
\newcommand\MF{\ensuremath{\mathbf{M}}}
\newcommand\SF{\ensuremath{\mathbf{S}}}
\newcommand\FR{\ensuremath{\mathrm{F}}}
\newcommand\IR{\ensuremath{\mathrm{I}}}
\newcommand\LR{\ensuremath{\mathrm{L}}}
\newcommand\PR{\ensuremath{\mathrm{P}}}
\newcommand\TR{\ensuremath{\mathrm{T}}}
\newcommand\WR{\ensuremath{\mathrm{W}}}
\newcommand\bR{\ensuremath{\mathrm{b}}}
\newcommand\dR{\ensuremath{\mathrm{d}}}
\newcommand\iR{\ensuremath{\mathrm{i}}}
\newcommand\mR{\ensuremath{\mathrm{m}}}
\newcommand\tR{\ensuremath{\mathrm{t}}}
\newcommand\RSS{\ensuremath{\mathrm{RSS}}}
\newcommand\OCC{\ensuremath{\mathrm{OCC}}}
\newcommand\ENS{\ensuremath{\mathrm{ENS}}}
\newcommand\sign{\ensuremath{\mathrm{sign}}}
\newcommand\sigmoid{\ensuremath{\mathrm{sigmoid}}}
\newcommand\stack{\ensuremath{\mathrm{stack}}}
\newcommand\rx{\ensuremath{\mathrm{rx}}}
\newcommand\tx{\ensuremath{\mathrm{tx}}}
\newcommand\corr{\ensuremath{\mathrm{corr}}}
\newcommand\smax{\ensuremath{\mathrm{smax}}}
\newcommand\smin{\ensuremath{\mathrm{smin}}}
\newcommand\Eb{\ensuremath{\mathbb{E}}}
\newcommand\Rb{\ensuremath{\mathbb{R}}}
\newcommand\Zb{\ensuremath{\mathbb{Z}}}
\newcommand{\wt}{\widetilde}
\newcommand{\wh}{\widehat}
\newacro{isac}[ISAC]{integrated sensing and communication}
\newacro{rss}[RSS]{received signal strength}
\newacro{cgan}[CGAN]{conditional generative adversarial network}
\newacro{thz}[THz]{terahertz}
\newacro{sota}[SOTA]{state-of-the-art}
\newacro{mse}[MSE]{mean squared error}
\newacro{dl}[DL]{deep learning}
\newacro{pdf}[PDF]{probability density function}
\newacro{bs}[BS]{base station}
\newacro{awgn}[AWGN]{additive white Gaussian noise}
\newacro{ssb}[SSB]{synchronization
signal blocks}
\newacro{ssb}[SSB]{synchronization
signal blocks}
\newacro{dod}[DoD]{directions of departure}
\newacro{dnn}[DNN]{deep neural network}
\newacro{gan}[GAN]{generative adversarial network}
\newacro{cnn}[CNN]{convolutional neural network}
\newtheorem{proposition}{Proposition}
\begin{document}

\title{Advancing THz Radio Map Construction and Obstacle Sensing: An Integrated Generative Framework in ISAC
}

\author{
        Tianyu~Hu,~\IEEEmembership{Student Member,~IEEE,}
        Shuai Wang,~\IEEEmembership{Member,~IEEE,}
        Yunhang Xie,\\
        Lingxiang Li,~\IEEEmembership{Member,~IEEE,}   
        Zhi Chen,~\IEEEmembership{Senior Member,~IEEE,}
        Boyu Ning,~\IEEEmembership{Member,~IEEE,}\\
        Wassim Hamidouche,~\IEEEmembership{Senior Member,~IEEE,}
        Lina Bariah,~\IEEEmembership{Senior Member,~IEEE,}\\
        Samson Lasaulce,~\IEEEmembership{Member,~IEEE,}
        \text{Mérouane} Debbah,~\IEEEmembership{Fellow,~IEEE}
        
\thanks{This paper is an extension of our work presented at the IEEE GLOBECOM Conference in 2024~\cite{hu2024towards}}
\thanks{This work was supported in part by the National Natural Science Foundation of China (NSFC) under Grant 62271121, the National Key R\&D Program of China under Grant 2024YFE0200400 and SQ2024YFE0200402, and the National Key Laboratory of Wireless Communications Foundation under Grant 2023KP01605. (Corresponding author: Lingxiang Li)}

\thanks{T. Hu, S. Wang, Y. Xie, L. Li, Z. Chen, and B. Ning are with University of Electronic Science and Technology of China (UESTC), China. W. Hamidouche, L. Bariah, S. Lasaulce, and M. Debbah are with Khalifa University, UAE. T. Hu is also with Khalifa University, UAE. (e-mail: huty@std.uestc.edu.cn; shuaiwang@uestc.edu.cn; xieyh@std.uestc.edu.cn;  lingxiang{\_}li{\_}uestc@hotmail.com; chenzhi@uestc.edu.cn; boydning@outlook.com; \{wassim.hamidouche, lina.bariah, samson.lasaulce, merouane.debbah@ku.ac.ae\}).}}


\maketitle
\begin{abstract}
\Ac{isac} in the \ac{thz} band enables obstacle detection, which in turn facilitates efficient beam management to mitigate \ac{thz} signal blockage. Simultaneously, a \ac{thz} radio map, which captures signal propagation characteristics through the distribution of \ac{rss}, is well-suited for sensing, as it inherently contains obstacle-related information and reflects the unique properties of the \ac{thz} channel. This means that communication-assisted sensing in \ac{isac} can be effectively achieved using a \ac{thz} radio map. However, constructing a radio map presents significant challenges due to the sparse deployment of \ac{thz} sensors and their limited ability to accurately measure the \ac{rss} distribution, which directly affects obstacle sensing. In this paper, we formulate an integrated problem for the first time, leveraging the mutual enhancement between sensed obstacles and the constructed \ac{thz} radio maps. To address this challenge while improving generalization, we propose an integration framework based on a \ac{cgan}, which uncovers the manifold structure of \ac{thz} radio maps embedded with obstacle information. Furthermore, recognizing the shared environmental semantics across \ac{thz} radio maps from different beam directions, we introduce a novel voting-based sensing scheme, where obstacles are detected by aggregating votes from \ac{thz} radio maps generated by the \ac{cgan}. Simulation results demonstrate that the proposed framework outperforms non-integrated baselines in both radio map construction and obstacle sensing, achieving up to 44.3\% and 90.6\% reductions in \ac{mse}, respectively, in a real-world scenario. These results validate the effectiveness of the proposed voting-based scheme.
\end{abstract}

\begin{IEEEkeywords}
ISAC, obstacle sensing, radio map construction,  generalization ability,  THz communications.
\end{IEEEkeywords}

\IEEEpeerreviewmaketitle

\acresetall
\section{Introduction}

\IEEEPARstart{I}{n} recent years, \ac{thz} communication and \ac{isac} have emerged as key enablers of future sixth-generation (6G) wireless networks, offering advanced capabilities in terms of high-speed data transmission and environmental sensing \cite{ning2023beamforming, 9737357, 10286447, 10833779, 10243495}. The intrinsic characteristics of \ac{thz} signals, marked by narrow directional beams and weak diffraction performance, expose them to frequent obstruction by obstacles~\cite{chen2021coverage}, thereby negatively impacting the coverage and quality of \ac{thz} communication signals. Consequently, the demand for obstacle sensing with \ac{thz} \ac{isac} becomes imperative. The blockage sensitivity of \ac{thz} signals facilitates obstacle sensing, while the sensed obstacles, in turn, enable \ac{thz} communication systems to proactively mitigate link disruptions through strategic interventions such as beam switching, relaying, or hand-off~\cite{pang2023cellular, charan2021vision, 9771340, 9140329}, thereby achieving both integration and coordination gains in \ac{isac}.

\begin{figure}[!t]
\centering
\includegraphics[width=7cm,height=3cm] {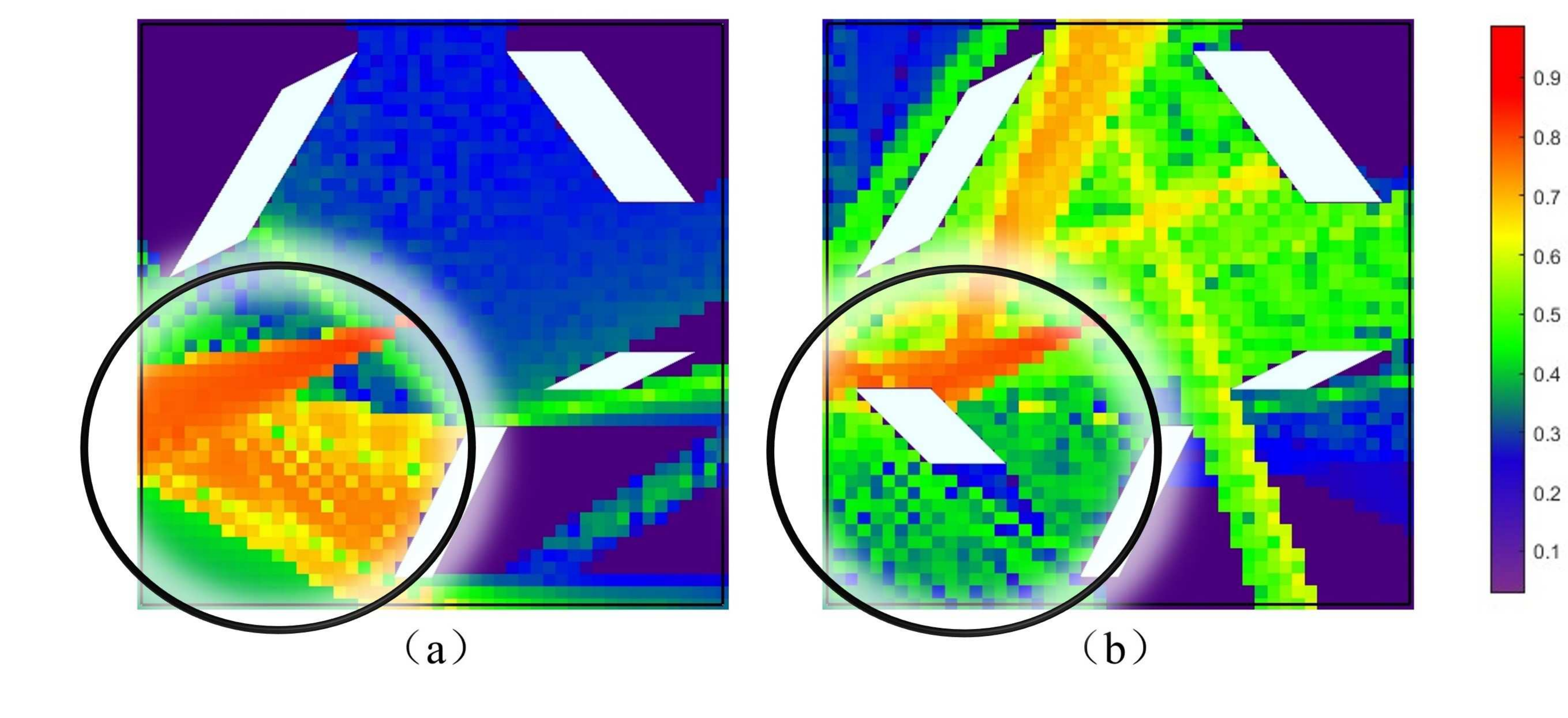}
\vspace{-10pt}
\caption{\acs{thz} radio maps with the same beam direction, where each block represents an obstacle and different colors indicate the strength of \acs{rss}. (a) Without the lower-left obstacle. (b) With the lower-left obstacle.}
\label{area_example}
\end{figure}

Radio maps, which capture channel knowledge of \ac{rss} across all transceiver locations, provide valuable insight into the propagation characteristics of wireless signals affected by the physical environment, particularly obstacles \cite{10430216,9146281,9954187,wang2024generativeairfsensing,hu2024towards,10626572}. In this context, the radio map generated during communication effectively represents the radio environment, facilitating applications such as network optimization and beam interventions. Intuitively, since the strength of \ac{rss}, affected by obstacles, provides direct access to information about the shape and location of obstacles, the radio environment characterized by \ac{rss} distribution can be utilized for obstacle sensing, enabling wireless communication systems to possess sensing capabilities, i.e., communication-assisted sensing in \ac{isac}.

Given the blockage sensitivity and directionality of \ac{thz} beams \cite{hu2024towards,10626572}, \ac{thz} radio maps are particularly well-suited for obstacle sensing. This is because the \ac{rss} strength measured in the \ac{thz} band is highly correlated with the presence of obstacles within the beam's range. Moreover, \ac{thz} radio maps that vary with beam direction share the same environmental semantics (i.e., the same obstacle layout) across different directions. Note that since \ac{thz} radio maps are determined by beam directions, we also refer to them as directional maps. This is illustrated in Fig.~\ref{area_example}, which shows such maps with directionality. From the areas outlined by the black circle in Fig.~\ref{area_example}, it can be observed that the obstacle in the lower-left corner of Fig.~\ref{area_example}(b) does not appear in Fig.~\ref{area_example}(a), demonstrating the attenuation effect of this obstacle on the \ac{thz} beam and the corresponding \ac{rss}.

However, in practice, it is challenging to fully acquire the \ac{rss} distribution, i.e., to construct a \ac{thz} radio map for obstacle sensing. This difficulty arises from the impracticality of positioning \ac{thz} sensors (e.g., user equipment) to measure \ac{rss} at every conceivable location for every scenario requiring environmental information. In other words, only a limited amount of prior knowledge of the \ac{thz} radio map can be obtained from the sparsely and randomly deployed sensors. Overcoming this limitation and deriving a comprehensive propagation model that enables obstacle sensing with good generalization (i.e., the ability to perform well in unseen scenarios or environments) remains an open research problem.

Although there have been some related studies, they have not fully addressed this issue. For radio map construction, existing approaches can typically be categorized into model-assisted \cite{8269065,9500337,9154223,10041012,zeng2022uav,10437033} and data-driven methods \cite{hu20233d,levie2021radiounet,10130091,9523765,10626572,hu2024towards}, where the former assumes that \ac{rss} follows a signal propagation model, and the latter primarily leverages the powerful data mining capabilities of \ac{dl}. Works such as \cite{levie2021radiounet,10130091,9523765} used prior physical environmental maps for radio map construction but did not account for corresponding environmental sensing. On the other hand, authors in~\cite{8269065,9500337,9154223,10041012,zeng2022uav,10437033,10626572} proposed jointly estimating a virtual environmental map and radio map. However, their generalization capabilities may be limited, as they adopt environment-specific online training and deployment. Additionally, the difference between the virtual map and the actual physical environment can be substantial. For \ac{thz}-based sensing, \cite{9449031,9967989,10061469,9729746,9838764} utilized \ac{thz} communication signals for target sensing within the \ac{isac} framework. However, such approaches may lack the capability to accurately depict the radio maps. To achieve both radio and environmental information, \cite{hu2024towards,10626572} simultaneously addressed \ac{thz} radio map construction and obstacle sensing. However, \cite{10626572} did not focus on the generalization ability, while \cite{hu2024towards} failed to meet the precision requirements for \ac{thz} beam strategic interventions in obstacle sensing.

It is important to note that \ac{thz} radio map construction and obstacle sensing are not sequential tasks, but complementary ones. A more accurate \ac{thz} radio map enables more precise sensing, while sensing results can, in turn, improve radio map construction due to the additional environmental information they provide. Moreover, since multiple directional \ac{thz} radio maps share the same environmental semantics, obstacles can be sensed based on an ensemble of multi-maps~\cite{10626572}, rather than relying on individual maps for separate sensing results~\cite{hu2024towards}. Motivated by the above, we aim to address the necessary yet challenging tasks of \ac{thz} radio map construction and obstacle sensing in an integrated manner, utilizing the shared environmental semantics. Specifically, we formulate an integrated problem to mutually improve the accuracy of construction and sensing, proposing a voting-based sensing scheme to enable ensemble and generalization. To solve this problem with enhanced generalization, and considering the powerful capabilities of generative models, we develop an integration framework based on \ac{cgan}~\cite{goodfellow2014generative,isola2017image}, balancing generation quality and inference speed. \Acp{cgan} enable efficient inference with a single forward pass under the guidance of the input condition, making them computationally suitable for real-time construction and sensing. While alternative generative models exist, their comparative analysis is beyond the scope of this work. Compared to \ac{sota} studies~\cite{8269065,9500337,9154223,10041012,zeng2022uav,10437033,hu20233d,levie2021radiounet,10130091,9523765,10626572,hu2024towards,9449031,9967989,10061469,9729746,9838764}, which suffer from limited generalization and fail to balance accurate sensing and construction, our contributions address these challenges and are summarized as follows:

\begin{figure}[!t]
\centering
\includegraphics[height=4cm,width=8.5cm] {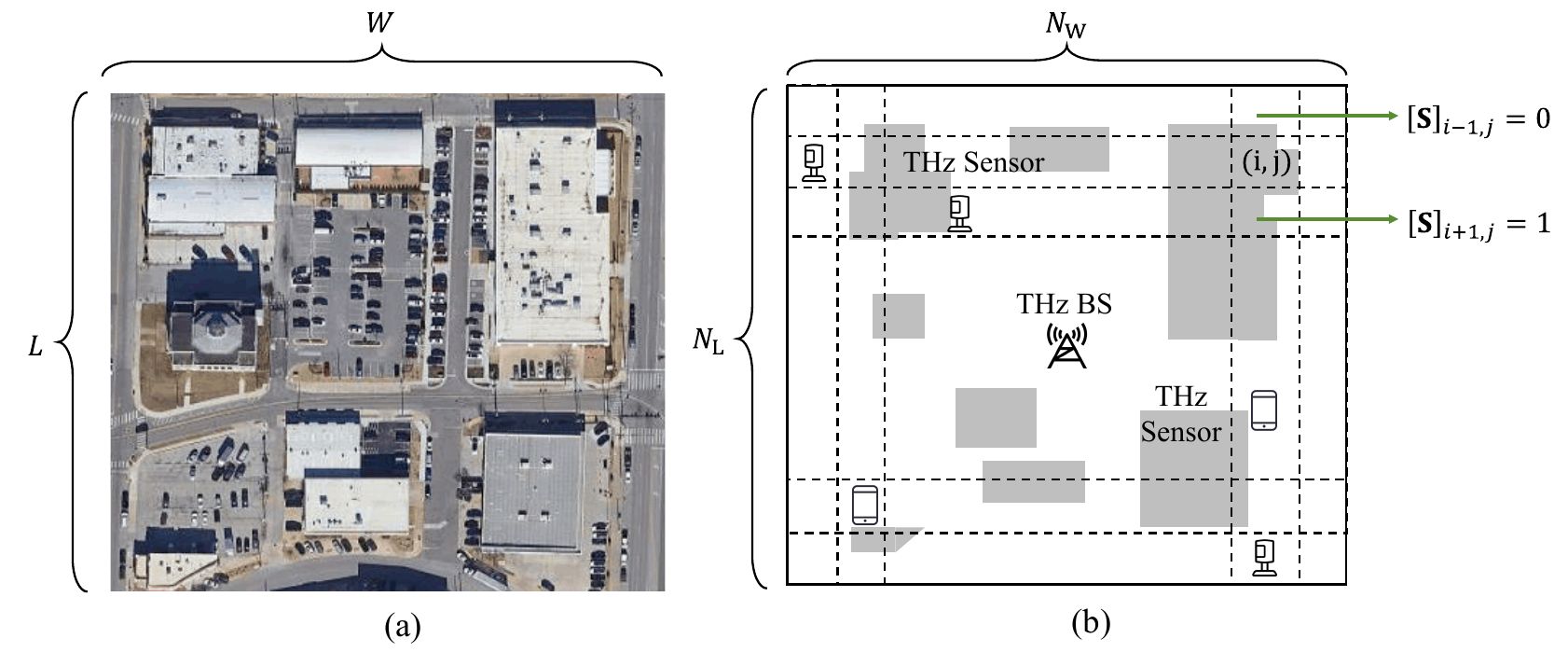}
\vspace{-5pt} 
\caption{Visual illustration of an example scenario for the considered 2D area: (a) a satellite image of a section of {\it Oklahoma}, and (b) the corresponding schematic diagram.}
\label{new_scenario}
\end{figure}

\begin{itemize}
\item We formulate an integrated problem for \ac{thz} radio map construction and obstacle sensing, which not only considers the probability distribution of obstacle layouts but also incorporates the shared environmental semantics across different directional \ac{thz} radio maps. To the best of the authors' knowledge, this is the first time these two tasks have been addressed in an integrated manner.
\item We propose a \ac{cgan}-based integration framework to solve the formulated problem and achieve \ac{thz} \ac{isac}, where the \ac{cgan} enhances generalization. By leveraging the objective function of the problem, the shared environmental semantics, and a U-Net module~\cite{ronneberger2015u} capable of uncovering the manifold nature of \ac{thz} radio maps, the \ac{cgan} effectively learns the \ac{thz} signal propagation mechanism with respect to the distribution of obstacle layouts.
\item We propose a voting-based sensing scheme that exploits the shared environmental semantics, wherein "hard" and "soft" voting strategies are applied to different directional \ac{thz} radio maps. The proposed scheme facilitates mutual accuracy improvements, thereby supporting the integrated solution to the problem within the proposed framework.
\item Simulation results show that the proposed framework significantly outperforms existing baselines~\cite{hu2024towards,hu20233d,levie2021radiounet} in terms of both \ac{mse} and average precision (AP). Specifically, for MSE, consistent accuracy improvements of up to 44.3\% and 90.6\% are achieved for the construction and sensing tasks, respectively. Through simulated and real-world city scenarios, the framework’s ability to generalize, its integration benefits, and the effectiveness of the voting mechanism have been thoroughly validated.
\end{itemize}

The remainder of this paper is organized as follows. Section~\ref{sec:system} introduces the system model and the preliminaries of directional THz radio maps. Section \ref{sec:integ} introduces the formulated problem as well as the proposed voting-based scheme. In Section \ref{sec:prop} and Section~\ref{sec:res}, the proposed \ac{cgan}-based framework is described and evaluated, respectively. Finally, Section~\ref{sec:conc} concludes the  paper. 

\renewcommand{\arraystretch}{1.5} 
\begin{table}[htbp]
\caption{List of Notations}
\label{tab:notations}
\begin{supertabular}{>{\centering\arraybackslash}m{0.14\textwidth}  
|>{\centering\arraybackslash}m{0.30\textwidth}} \toprule
\textbf{Symbol} & \textbf{Description} \\ \midrule
Tensors and Matrices & Bold capital letters \\ \midrule
$\Zb^{+}_{N}$ & The finite set of integers $\{1,\cdots,N\}$ \\ \midrule
$\|\cdot\|_0$ and $\|\cdot\|_2$ & $\ell_0$ norm and $\ell_2$ norm, respectively \\ \midrule
$\|\cdot\|_F$ & Frobenius norm \\ \midrule
$\odot$ & Hadamard product \\ \midrule

$\sign(x)$ & $\sign(x)=1$ if $x>0$; $\sign(x)=0$ otherwise\\ \hline
\multirow{2}{*}{$\stack(\SF,N)$} & Stacking $N$ copies of the matrix $\SF$ along \\ & a new axis to create a 3D tensor \\ \midrule
$\sign(x)$ & $\sign(x)=1$ if $x>0$; $\sign(x)=0$ otherwise\\ \midrule
$[\SF]_{i,j}$ & The element $(i, j)$ of the matrix $\SF$\\ \midrule
\multirow{2}{*}{$[\stack(\SF,N)]_{d}$} & The slice along the $d$-th dimension of \\ & the tensor $\stack(\SF,N)$\\ \midrule
\multirow{2}{*}{\(\mathbb{E}_{x}\left[f\left(x\right)\right]\)} & The expectation of function \(f\left(x\right)\)  \\ &  w.r.t. distribution \(x\sim p\left(x\right)\) \\ \midrule
$\nabla_{x}f(x)$ & The gradient of function \(f\left(x\right)\) w.r.t. $x$\\ 
 \bottomrule
\end{supertabular}
\end{table}

\section{Preliminaries and System Model}
\label{sec:system}
For the system setup considered, we first model the environment to be sensed, along with the corresponding \ac{thz} communication scenario. Building on this, we then elaborate on the signal propagation model and proceed to characterize the directional \ac{thz} radio maps using \ac{rss}. Given the limited prior knowledge available about the \ac{thz} radio map, we introduce a corresponding construction problem and explore the challenges associated with its exploitation in the sensing process.

\subsection{System Setup Description}

As shown in Fig.~\ref{new_scenario}, we consider a two-dimensional (2D) rectangular area, $\MC = [0;L] \times [0; W] \in \Rb^{2}$, with length $L$ and width $W$. The area contains numerous obstacles with varying attributes, such as different shapes, categories, quantities, and locations, distributed in an organized yet unpredictable manner. Examples of such environments include dense urban areas and functional indoor spaces. Note that Fig.~\ref{new_scenario}(a) shows a satellite image of a section of {\it Oklahoma}~\cite{Alkhateeb2019}, and the obstacle layout shown is intended as a representative scenario within this region. Using a spatial discretization approach~\cite{9523765}, we partition the area $\MC$ into $N_{\LR} \times N_{\WR}$ grids, as illustrated in Fig.~\ref{new_scenario}(b). The coordinates at the center of the grid $(i, j)$ are denoted as $\xB_{i,j} \in \Rb^{2}$, with each grid location corresponding to the respective center coordinates.


To model the environmental scenarios within $\MC$, we represent the environmental layout, dependent on the various attributes of the obstacles inside, using a random variable $\SC$. This variable $\SC$ follows a \ac{pdf} $p(\SC)$ defined on a sample space $\AC$, which is large due to the substantial variety of obstacle configurations within $\MC$. Consequently, this model accounts for all possible environmental scenarios, with each outcome in $\AC$ corresponding to a specific scenario characterized by the determined attributes of the obstacles~\cite{grinstead2012introduction}. Building upon an occupancy map-based model~\cite{mescheder2019occupancy} to describe the environment, we define a binary matrix $\SF \in \{0, 1\}^{N_{\LR} \times N_{\WR}}$ to represent the discretized outcome of $\SC$. In this matrix, $[\SF]_{i,j} = 1$ indicates that the grid $(i,j)$ is at least half-occupied by an obstacle (as shown in Fig.~\ref{new_scenario}(b)), whereas $[\SF]_{i,j} = 0$ signifies the absence of obstacles in that grid. For integrated \ac{thz} communication and environment sensing, we assume that a \ac{thz} \ac{bs} is positioned at the center of $\MC$ with location $\xB_{\tx}$, functioning as the transmitter equipped with a directional antenna. In addition, $K$ \ac{thz} sensors, each with omnidirectional antennas, are sparsely distributed at the centers of $K$ grids, where $K < N_{\LR} \times N_{\WR}$, typically being small. These \ac{thz} sensors, acting as receivers, are responsible for capturing the \ac{rss} measurements resulting from the signals transmitted by the \ac{thz} \ac{bs}. To model this, we introduce a mask matrix $\MF_{\rx} \in \{0, 1 \}^{N_{\LR} \times N_{\WR}}$, where an entry $[\MF_{\rx}]_{i,j} = 1$ indicates the presence of a sensor within the grid $(i,j)$, and $[\MF_{\rx}]_{i,j} = 0$ otherwise. Therefore, the mask matrix satisfies $\|\MF_{\rx}\|_0 = K$, where $\|\cdot\|_0$ denotes the $\ell_0$ norm.

\subsection{Directional Propagation Model and Radio Maps}
For a given environmental scenario with layout $\SF$, we consider a narrow \ac{thz} beam emitted from the \ac{bs}, with direction $\theta_{d}$ and beamwidth $\theta_{\bR}$, for communication. To characterize the power attenuation of the corresponding \ac{thz} signal as it propagates from the \ac{bs} location $\xB_{\tx}$ to the grid $\xB_{i,j}$, we define the function $\Gamma_{\SF}\left(\xB_{i,j}, \theta_{d}, \theta_{\bR}\right)$. This function accounts for the interaction of the signal with obstacles in $\SF$, such as reflection and scattering, along the propagation path. Assuming that \ac{thz} beams cannot penetrate obstacles, the \ac{rss} at the location $\xB_{i,j}$ resulting from the directional \ac{thz} beam is given by:
\begin{align}
\Psi_{\SF}\left(\xB_{i,j},\theta_{d},\theta_{\bR}\right)=P_{\TR}\Gamma_{\SF}\left(\xB_{i,j},\theta_{d},\theta_{\bR}\right)\left(1-[\SF]_{i,j}\right)+\sigma^{2},		
\label{Model_RSS}
\end{align} where $P_{\TR}$ is the transmit power at the \ac{bs} and $\sigma^{2}$ represents the variance of the \ac{awgn}. It is important to note that the \ac{thz} channel experiences significant challenges such as wide molecular absorption, distinct scattering properties~\cite{9514889,8761205}, and high sensitivity to obstacles, among other factors. These characteristics make the power attenuation function $\Gamma_{\SF}\left(\xB_{i,j}, \theta_{d}, \theta_{\bR}\right)$ primarily dependent on the obstacles, and obtaining an explicit expression for it is difficult. Therefore, we employ ray tracing~\cite{remcom} to characterize this power attenuation, which is further evaluated in Section~\ref{sec:res}.

The directional \ac{thz} radio map, which illustrates the distribution of the \ac{rss} for the environmental scenario with layout $\SF$, is obtained by collecting the \ac{rss} from all grid locations\footnote{Note that the use of high-gain beams prompts us to employ directional rather than omnidirectional radio maps in \ac{thz} communications, as this is more realistic in practice.}. Furthermore, by performing beam scanning, similar to the \ac{ssb} used in 5G NR, a number $N_{\dR}$ of transmitted \ac{thz} beams with a set $\DC$ of evenly distributed \ac{dod} should be considered. This leads to the generation of the corresponding multidirectional \ac{thz} radio maps. Specifically, $\DC=\{\theta_{d}=d\,\Delta\theta, d=1,\dots, N_{\dR}\}$, where $\Delta\theta$ is the angular separation between two adjacent beam \acp{dod}. To represent the $N_{\dR}$ multi-directional \ac{thz} radio maps under the obstacle layout $\SF$, we define a tensor $\PsiB_{\SF} \in \Rb^{N_{\LR}\times N_{\WR} \times N_{\dR}}$, which maps all grid locations within $\MC$ and all beam directions in $\DC$ with beamwidth $\theta_{\bR}$ to a set of corresponding signal strengths. The entry $(i, j, d)$ of this tensor is given by $\Psi_{\SF}(\xB_{i,j},\theta_{d},\theta_{\bR})$, with a particular directional radio map corresponding to the $d$-th beam direction represented as $[\PsiB_{\SF}]_{d}$.

As seen in~\eqref{Model_RSS}, the directional radio maps are inherently correlated with the environmental obstacles through which the \ac{thz} signals propagate, which motivates the use of a segmentation-based sensing approach~\cite{hu2024towards}. A simple method, for example, involves directly segmenting obstacles from $\PsiB_{\SF}$ based on the condition $\Psi_{\SF}(\xB_{i,j},\theta_{d},\theta_{\bR})=\sigma^{2}$, although this may result in the misclassification of some shadowed regions, where $P_{\TR}\Gamma_{\SF}\left(\xB_{i,j},\theta_{d},\theta_{\bR}\right)\ll\sigma^{2}$, as obstacles. Notably, multi-directional radio maps offer a view of the environment through the distribution of signal strengths from various angles, providing new opportunities to improve obstacle sensing. This concept has been explored in~\cite{10626572}, where obstacles are detected through the intersection of multi-directional \ac{thz} radio maps. Therefore, when utilizing multi-directional maps for the target scenario, it becomes possible to infer the characteristics of environmental obstacles and achieve accurate sensing.

\subsection{Constructing Directional THz Radio Map from Sparse RSS measurements for Obstacle Sensing}
Due to the limited deployment of \ac{thz} sensors within $\MC$, only a sparse set of \ac{rss} measurements, represented by the sparse tensor $\wt{\PsiB}_{\SF}=\PsiB_{\SF} \odot \stack(\MF_{\rx},N_{\dR})$, can be obtained. This tensor $\wt{\PsiB}_{\SF}$ captures the \ac{rss} with respect to $N{\dR}$ beam directions, but it does not fully characterize the complete directional \ac{thz} radio maps. In this paper, we focus on constructing the multi-directional \ac{thz} radio maps $\PsiB_{\SF}$ and accurately sensing the obstacles $\SF$ for any given environmental scenario within $\MC$, using only the \ac{rss} measurements $\wt{\PsiB}_{\SF}$ obtained in the same scenario.


Firstly, we define a function $f_{\RSS}: \Rb^{N_{\LR}\times N_{\WR}\times N_{\dR}} \to \Rb^{N_{\LR}\times N_{\WR}\times N_{\dR}}$, which takes the \ac{rss} measurement $\wt{\PsiB}_{\SF}$ as input and produces a tensor $\wh{\PsiB}_{\SF}$ w.r.t. the \ac{rss} estimations of all grid locations and all beam directions. Therefore, the task of directional \ac{thz} radio map construction is to derive $f_{\RSS}$ via solving the following optimization problem: 
\begin{subequations}\label{construction}
\begin{align}
	\underset{f_{\RSS}}{\mathrm{minimize}}~~&\Eb_{\SF}\left[\|\PsiB_{\SF}-\wh{\PsiB}_{\SF}\|_{\FR}^{2}\right], \label{const:2a} \\
	 \text{s.t.}~~& 	\wh{\PsiB}_{\SF}\odot\stack(\MF_{\rx},N_{\dR})=\wt{\PsiB}_{\SF},\, \forall\SF, \label{const:2b}  \\
     & \wh{\PsiB}_{\SF}=f_{\RSS}(\wt{\PsiB}_{\SF}),\forall\SF. \label{const:2c}
\end{align} 
\end{subequations} Given the limited prior data $\{\Psi_{\SF}(\xB_{i_k,j_k},\theta_{d},\theta_{\bR})\}_{k=1}^{K}$, the constraint~\eqref{const:2b} is introduced, and the optimal solution $f_{\RSS}^\star$ to problem~\eqref{construction} minimizes the \ac{rss} estimation error across all possible environmental scenarios, thereby approximating the directional radio maps $\PsiB_{\SF}$, $\forall \SF \in \AC$, to the maximum extent.

Recall that conventional model-assisted approaches for radio map construction rely on accurate modeling of the signal propagation process. However, achieving this accuracy is particularly challenging for constructing \ac{thz} radio maps. The unique characteristics of \ac{thz} signal propagation, as described in~\eqref{Model_RSS}, and the corresponding radio maps introduce several difficulties: 1) Uniqueness of the \ac{thz} channel: Compared with statistical sub-6G channels, the \ac{thz} channel's power attenuation function $\Gamma_{\SF}\left(\xB_{i,j},\theta_{d},\theta_{\bR}\right)$ is more complex and heavily influenced by obstacles, 2) Directionality of \ac{thz} radio maps: \ac{thz} radio maps depend not only on the environmental layout but are also significantly affected by the beam directions and beamwidth; 3) Complex environmental layouts: The complex and often inaccessible nature of environmental layouts, along with their intricate interaction with signal propagation in $\Gamma_{\SF}\left(\xB_{i,j},\theta_{d},\theta_{\bR}\right)$, adds further complexity. Existing model-based approaches \cite{8269065,9500337,9154223,10041012,zeng2022uav,10437033}, which are explicitly designed for the sub-6G band, may lead to model mismatches when applied to \ac{thz} radio maps, significantly undermining their effectiveness.

In contrast, data-driven approaches hold promise in circumventing the model mismatch issue. These approaches yield desirable performance due to \ac{dl}'s strong capability for complex function approximation~\cite{hu2024towards, hu20233d, levie2021radiounet, 10130091, 9523765}. The key idea is to directly learn the complex signal propagation process $\Gamma_{\SF}\left(\xB_{i,j}, \theta_{d}, \theta_{\bR}\right)$ from historical \ac{rss} measurements across a variety of environmental scenarios in $\AC$, and then use this knowledge to infer the radio maps for a target environmental scenario. Building on this idea, a \ac{dnn} with a carefully designed structure can be trained to address the intractable functional optimization in~\eqref{construction}. This is achieved through standard \ac{dnn} training on a set of historical measurements from different environmental scenarios. Consequently, the solution $f_{\RSS}$ is represented by the trained \ac{dnn}. Specifically, for a given target environmental scenario, the directional \ac{thz} radio maps are estimated and can subsequently be used for obstacle sensing tasks such as segmentation~\cite{hu2024towards} and/or intersection detection~\cite{10626572}. While this approach is promising, a significant challenge remains due to issues with poor generalization. This problem arises for two main reasons. First, there are typically only a limited number of \ac{rss} measurements available, which results in a large solution space for problem~\eqref{construction}. This makes it difficult to identify a solution with good generalization. In practice, to improve generalization performance, a prohibitively large number of historical data samples would be required for training, significantly increasing the training cost. Second, the sample space $\AC$ of $\SF$ in problem~\eqref{construction} is too vast to be fully covered by the available historical data. The target environmental scenario often lies outside the training distribution, meaning that it is not similar to any of the historical training samples. As a result, the trained \ac{dnn} performs poorly on the target environmental scenario, leading to inaccurate radio map estimation and poor obstacle sensing performance~\cite{9693274,goodfellow2016deep}.

\section{Integrated Directional Radio Map Construction and Obstacle Sensing}
\label{sec:integ}
In this section, to address the challenge of poor generalization, we propose an integrated paradigm for constructing directional radio maps and sensing obstacles, inspired by data-driven approaches. This approach is motivated by the observation that the construction of multi-directional \ac{thz} radio maps and the obstacle sensing process should mutually support each other. In essence, as discussed earlier, the accurate construction of multi-directional \ac{thz} radio maps, which share the same environmental semantics, should lead to more precise obstacle sensing. Additionally, leveraging prior information about sensed obstacles is crucial for mitigating errors in the construction of multi-directional \ac{thz} radio maps. Therefore, rather than relying solely on a data-driven map construction method, which faces challenges related to poor generalization, we integrate the tasks of constructing multi-directional \ac{thz} radio maps and obstacle sensing into a unified problem formulation. This integrated approach is then addressed using sophisticated \ac{gan}-based deep learning techniques. It is important to note that these two novel elements, the integration of tasks and the use of \acp{gan}, effectively mitigate the issue of poor generalization (which will be further elaborated in Section~\ref{sec:prop}), thereby enhancing the performance of both tasks. To further substantiate the motivation for using multi-directional \ac{thz} radio maps in obstacle sensing, we adopt the concept of ensemble learning~\cite{zhou2012ensemble} to validate our intuition. In addition, we reformulate the problem~\eqref{construction} as a guide for the subsequent design of the \ac{gan} model. It is worth noting that more precise obstacle information can assist the \ac{gan} model in generating more accurate radio maps, which, in turn, enhances the performance of the ensemble learning-inspired obstacle sensing, thereby achieving the integration goal.

\subsection{Generic Problem and Design Considerations}

Let us define $f_{\OCC}: \Rb^{N_{\LR}\times N_{\WR}\times N_{\dR}} \to \{0,1\}^{N_{\LR}\times N_{\WR}}$ as a function that maps the estimated multi-directional \ac{thz} radio maps to an estimated occupancy map $\hat{\SF}$ for any given environmental scenario. The general problem of integrated \ac{thz} radio map construction and obstacle sensing is then formulated as:
\begin{subequations}\label{reconstruction_problem}
\begin{align}
    \underset{f_{\RSS},\,f_{\OCC}}{\mathrm{minimize}}~~&\Eb_{\SF}\left[\|\PsiB_{\SF}-\wh{\PsiB}_{\SF}\|_{\FR}^{2}+ C(\SF, \hat{\SF})\right], \label{const:3a} \\
    \text{s.t.}~~& \hat{\SF}=f_{\OCC}(\wh{\PsiB}_{\SF}),\forall\SF, \label{const:3b}\\
    & \eqref{const:2b}\eqref{const:2c}
\end{align} 
\end{subequations}
The equality constraint~\eqref{const:3b} represents the process of obstacle sensing derived from multi-directional \ac{thz} radio maps, while $C(\SF, \hat{\SF})$ is a loss function that quantifies the dissimilarity between the two inputs. The generic problem formulation~\eqref{reconstruction_problem} captures the integrated tasks of \ac{thz} radio map construction and obstacle sensing due to the inclusion of the equality constraint~\eqref{const:3b} and the loss $C(\SF, \hat{\SF})$. These two elements work together to impose prior knowledge of environmental obstacles $\SF$ on the constructed multi-directional \ac{thz} radio maps $\wh{\PsiB}_{\SF}$, thereby making the two tasks complementary and mutually reinforcing.

Intuitively, for any environmental scenario $\SF$ within $\AC$, if the function $f_{\RSS}$ can generate the correct multi-directional \ac{thz} radio maps $\wh{\PsiB}_{\SF}$ from sparse historical measurements $\wt{\PsiB}_{\SF}$, then $\PsiB_{\SF}$ will be perfectly recovered, and the obstacle sensing task will succeed as well~\cite{hu2024towards, 10626572}. In practice, however, the construction of multi-directional \ac{thz}  radio maps will inevitably be imperfect due to the challenge of accurately learning the complex \ac{thz}  signal propagation process under the obstacle layout distribution $p\left(\SC\right)$. This insufficient understanding of the propagation process prevents conventional data-driven approaches from effectively generalizing to unseen scenarios. In contrast, the loss $C(\SF, \hat{\SF})$, introduced in the integrated problem, penalizes errors in obstacle sensing that stem from the construction errors. This compels the function $f_{\RSS}$ to generate multi-directional \ac{thz} radio maps $\wh{\PsiB}_{\SF}$ that contain accurate environmental information. As a result, the integration of these tasks allows for a more comprehensive and refined characterization of the \ac{thz} propagation process from the perspective of obstacles, thereby improving generalization and achieving greater accuracy in both $\wh{\PsiB}_{\SF}$ and $\hat{\SF}$.

Although the formulation of problem~\eqref{reconstruction_problem} is well-motivated, there are still two unresolved issues. The first issue concerns the instantiation of function $f_{\OCC}$, as the accuracy requirements for such mutual tasks are not fully clear. Furthermore, noise-driven errors complicate the integration of these tasks. Specifically, poor construction of $\wh{\PsiB}_{\SF}$ may hinder obstacle sensing, while erroneous sensing results can degrade the accuracy of the directional \ac{thz} radio maps, ultimately impacting generalization. Moreover, environmental noise can obscure the distinction between obstacles and regions where $P_{\TR}\Gamma_{\SF}\left(\xB_{i,j},\theta_{d},\theta_{\bR}\right)\ll\sigma^{2}$, which could lead to sensing errors in $f_{\OCC}$, even when  ideal maps $\PsiB_{\SF}$ are used. Therefore, $f_{\OCC}$ must be designed to account for these interdependencies, mitigating the negative effects of integration errors and ensuring reliable sensing performance. Without this consideration, the solution to~\eqref{reconstruction_problem} may not be satisfactory. To address this challenge, we incorporate a voting-based sensing solution from ensemble learning~\cite{zhou2012ensemble} into the design of $f_{\OCC}$, which will be further elaborated later. The second issue pertains to the formulation of the loss function $C(\SF, \hat{\SF})$. It must recognize that obstacle sensing is fundamentally a classification task for each grid while also considering the mutual accuracy requirements of both tasks. To this end, we utilize cross-entropy to define $C(\SF, \hat{\SF})$, which will be analyzed in Section~\ref{subsec:intr}.

\subsection{Voting-based Obstacle Sensing}
\label{subsec:vote}
By aggregating the outputs of several well-trained learners with sufficient diversity, ensemble learning~\cite{zhou2012ensemble} can improve overall performance and mitigate the potential weaknesses of individual learners. Inspired by this principle, we propose a voting-based sensing scheme, which defines two key functions: an intermediate function$f_{\iR}: \Rb^{N_{\LR}\times N_{\WR}} \to \{0,1\}^{N_{\LR}\times N_{\WR}}$ and a voting strategy function  $f_{\ENS}: \{0,1\}^{N_{\LR}\times N_{\WR}\times N_{\dR}} \to \{0,1\}^{N_{\LR}\times N_{\WR}}$, within the overall sensing function \( f_{\OCC} \). Specifically, the proposed scheme assumes that the function \( f_{\iR} \) generates \( N_{\dR} \) preliminary sensing results \( \{\hat{\SF}_{\theta_{d}}\}_{\theta_{d} \in \DC} \), corresponding to individual directional \ac{thz} radio maps \( \{[\wh{\PsiB}_{\SF}]_{d}\}_{d=1}^{N_{\dR}} \), where \( f_{\iR}([\wh{\PsiB}_{\SF}]_{d}) = \hat{\SF}_{\theta_{d}} \). The function \( f_{\ENS} \) then aggregates these results \( \{\hat{\SF}_{\theta_{d}}\}_{\theta_{d} \in \DC} \) and produces a more accurate sensing \( \hat{\SF} \), i.e., \( f_{\ENS}(\{\hat{\SF}_{\theta_{d}}\}_{\theta_{d} \in \DC}) = \hat{\SF} \). Therefore, the overall sensing function \( f_{\OCC} \) can be redefined as \( f_{\OCC}(\wh{\PsiB}_{\SF}) = f_{\ENS}(\{f_{\iR}([\wh{\PsiB}_{\SF}]_{d})\}_{d=1}^{N_{\dR}}) \). Moreover, through the ensemble approach, the function \( f_{\ENS} \) may mitigate the negative effects that arise from or are imposed on the \ac{thz} radio maps during the integration process. Notably, the motivation for the design of function \( f_{\iR} \) stems from the blockage sensitivity and directionality of \ac{thz} beams \cite{hu2024towards}, while the rationale behind \( f_{\ENS} \) arises from the shared environmental semantics across the directional \ac{thz} radio maps of different beam directions~\cite{10626572}.

Naturally, the preliminary occupancy estimations \( \{[\hat{\SF}_{\theta_{d}}]_{i,j}\}_{\theta_{d} \in \DC} \), corresponding to the directional \ac{thz} radio maps \( \{[\wh{\PsiB}_{\SF}]_{d}\}_{d=1}^{N_{\dR}} \), can be interpreted as \( N_{\dR} \) votes regarding the occupied or unoccupied state of the grid cell \( (i, j) \). Consequently, the primary challenge lies in the design of the voting strategy function \( f_{\ENS} \) to effectively aggregate these \( N_{\dR} \) votes. In ensemble learning, it is crucial that the learners are well-trained to ensure effective aggregation of their outputs. Similarly, before designing the function \( f_{\ENS} \), we assume that the sensing performance of the function \( f_{\iR} \) with respect to each directional \ac{thz} radio map is superior to that of a blind guessing estimator. That is, we assume the following error probability for the preliminary occupancy estimation:
\begin{equation}
\begin{aligned}
P\left([\hat{\SF}_{\theta_{d}}]_{i,j}\neq [\SF]_{i,j}\right)=\epsilon<0.5, \forall \theta_{d}\in\DC, i\in\Zb^{+}_{N_{\LR}}, j\in\Zb^{+}_{N_{\WR}}, 
 \label{sub_error}
\end{aligned}
\end{equation} 
where  $\epsilon$ represents the error probability of the preliminary estimation.
\begin{proposition}
If a majority voting strategy is employed (i.e., the state with more than half of the votes is selected as the final result \( [\hat{\SF}]_{i,j} \) of the function \( f_{\OCC} \)), and the error probabilities of \( \{[\hat{\SF}_{\theta_{d}}]_{i,j}\}_{\theta_{d} \in \DC} \) are assumed to be independent, the error probability of \( [\hat{\SF}]_{i,j} \) is given by:
\begin{equation}
\begin{aligned}
P\left([\hat{\SF}]_{i,j}\neq [\SF]_{i,j} \right)\leq e^{-2 N_{\dR} \left(\frac{1}{2}-\epsilon\right)^{2}}.
 \label{ensemble_inequality}
\end{aligned}
\end{equation}
\end{proposition}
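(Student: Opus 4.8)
The plan is to recognize this as a direct application of Hoeffding's inequality to a sum of independent Bernoulli random variables. Fix a grid cell $(i,j)$ and define, for each beam direction $\theta_d \in \DC$, the indicator $X_d = \mathbb{I}\{[\hat{\SF}_{\theta_d}]_{i,j} \neq [\SF]_{i,j}\}$, so that $X_d \in \{0,1\}$, the $X_d$ are independent by assumption, and $\Eb[X_d] = \epsilon < 1/2$ by~\eqref{sub_error}. A majority-voting error at cell $(i,j)$ means that at least half of the $N_{\dR}$ preliminary estimates are wrong, i.e. $\sum_{d=1}^{N_{\dR}} X_d \geq N_{\dR}/2$. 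Hence $P([\hat{\SF}]_{i,j} \neq [\SF]_{i,j}) \leq P\!\left(\sum_{d=1}^{N_{\dR}} X_d \geq N_{\dR}/2\right)$.

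Next I would rewrite this tail event in the centered form required by Hoeffding: since $\Eb[\sum_d X_d] = N_{\dR}\epsilon$, the event $\{\sum_d X_d \geq N_{\dR}/2\}$ is contained in $\{\sum_d X_d - N_{\dR}\epsilon \geq N_{\dR}(\tfrac12 - \epsilon)\}$, and because $\epsilon < 1/2$ the deviation $t := N_{\dR}(\tfrac12 - \epsilon)$ is strictly positive, so Hoeffding's inequality applies. For a sum of $N_{\dR}$ independent random variables each taking values in $[0,1]$, Hoeffding gives $P(\sum_d X_d - \Eb[\sum_d X_d] \geq t) \leq \exp\!\big(-2t^2/N_{\dR}\big)$. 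Substituting $t = N_{\dR}(\tfrac12-\epsilon)$ yields $2t^2/N_{\dR} = 2N_{\dR}(\tfrac12-\epsilon)^2$, which is exactly the exponent in~\eqref{ensemble_inequality}. The bound holds uniformly over all $i \in \Zb^{+}_{N_{\LR}}$, $j \in \Zb^{+}_{N_{\WR}}$ since the hypotheses in~\eqref{sub_error} are stated uniformly.

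One minor point I would address for rigor is the boundary case where $N_{\dR}$ is even and the vote is exactly tied ($\sum_d X_d = N_{\dR}/2$); the statement says "more than half" selects the result, so a tie could be resolved either way, but in the worst case we can charge a tie as an error, which is already what the inequality $\sum_d X_d \geq N_{\dR}/2$ (rather than $>$) accounts for — so the stated bound is still valid, possibly with slack. I would also note that the independence assumption on $\{[\hat{\SF}_{\theta_d}]_{i,j}\}_{\theta_d \in \DC}$ is exactly what licenses the use of Hoeffding for independent summands; this is flagged as a hypothesis in the proposition statement, so no further justification is needed here.

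The proof is essentially routine once the reduction to Hoeffding is identified, so there is no serious obstacle — the only thing requiring a moment of care is getting the centering and the sign of $(\tfrac12 - \epsilon)$ right so that the one-sided Hoeffding bound applies (this needs $\epsilon < 1/2$, which is precisely assumption~\eqref{sub_error}), and deciding how to treat the tie in the even-$N_{\dR}$ case. I would present the argument in three compact displayed lines: the containment of events, the application of Hoeffding, and the simplification of the exponent.
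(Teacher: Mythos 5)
Your proof is correct and is essentially the same argument as the paper's: both reduce the majority-voting error event to a tail bound on an independent Bernoulli count and apply Hoeffding's inequality with deviation $\tfrac{1}{2}-\epsilon$; you count erroneous votes (upper tail) while the paper counts correct votes (lower tail), which is only a relabeling. Your explicit treatment of the even-$N_{\dR}$ tie is a minor point the paper sidesteps by assuming $N_{\dR}$ odd, but it does not change the approach.
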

\begin{proof}
See Appendix \ref{Annex:a}, following \cite{zhou2012ensemble}.
\end{proof}

It follows from \eqref{ensemble_inequality} that as the number \( N_{\dR} \) of beam directions increases, the error probability of the aggregated sensing result \( [\hat{\SF}]_{i,j} \) via voting decreases exponentially and eventually saturates to zero. Expanding \eqref{ensemble_inequality} from the grid cell \( (i, j) \) to all grids in the region \( \MC \), we can conclude that leveraging \( N_{\dR} \) directional \ac{thz} radio maps \( \{[\wh{\PsiB}_{\SF}]_{d}\}_{d=1}^{N_{\dR}} \) is advantageous for reducing the error in obstacle sensing. Even if the quality of \( \{[\wh{\PsiB}_{\SF}]_{d}\}_{d=1}^{N_{\dR}} \) is suboptimal, as long as the average error probability of \( \wh{\PsiB}_{\SF} \) is less than 0.5, the obstacle sensing accuracy can still be maintained through the voting strategy. This approach also mitigates the risk that poor sensing quality would degrade subsequent construction performance. To ensure reliability, we multiply \( C(\SF, \hat{\SF}) \) by an indicator \( \eta \), which reflects the preliminary sensing performance, where \( \eta = 1 \) if the error probability \( \epsilon < 0.5 \) is achievable, and \( \eta = 0 \) otherwise. In this case, voting-based sensing and integration are only performed when \( \eta = 1 \).

However, majority voting with directional \ac{thz} radio maps may not be suitable for obstacle sensing, as it neglects the relationship between obstacles and \ac{rss}. For instance, with a large obstacle, most beam directions may cause a region to be shadowed in the radio maps, leading to its misclassification as part of the obstacle during majority voting, even if the region is not actually occupied. To improve the credibility of the voting results, we adopt the voting strategy from \cite{10626572}, where a grid is considered occupied only if it receives unanimous votes \( \{[\hat{\SF}_{\theta_{d}}]_{i,j}\}_{\theta_{d} \in \DC} \) indicating occupation. Specifically,
\begin{equation}
\begin{aligned}
\hat{\SF} = f_{\ENS}(\{\hat{\SF}_{\theta_{d}}\}_{\theta_{d} \in \DC}) = \hat{\SF}_{\theta_{1}} \odot \hat{\SF}_{\theta_{2}} \odot \cdots \odot \hat{\SF}_{\theta_{N_{\dR}}}.
\end{aligned}
\label{hard_vote}
\end{equation}

To verify the validity of the insight obtained from \eqref{ensemble_inequality} under the voting strategy described in \eqref{hard_vote}, we conduct a simple experiment, as shown in Fig.~\ref{GT_beam}, using \( N' \) obstacle layouts and the corresponding \ac{thz} radio maps. In this experiment, \( [\hat{\SF}_{\theta_{d}}]_{i,j} = 1 \) if \( [\PsiB_{\SF}]_{i,j,d} \leq \sigma^2 + \varepsilon \) and \( [\hat{\SF}_{\theta_{d}}]_{i,j} = 0 \) otherwise, where \( 0 < \varepsilon \ll 1 \). Fig.~\ref{GT_beam} illustrates the \ac{mse}, defined as \( \frac{1}{N'} \sum \| \SF - \hat{\SF} \|_{\FR}^2 \), as a function of the number of preliminary occupancy estimations \( N_{\dR} \) (i.e., the number of beam directions). The results show that the \ac{mse} decreases as \( N_{\dR} \) increases. This trend aligns with the theoretical insight, despite the presence of one or two outliers due to the inherent randomness in beam directions.

\begin{figure}[!t]
\center
\includegraphics[height=3.5cm, width=8.5cm] {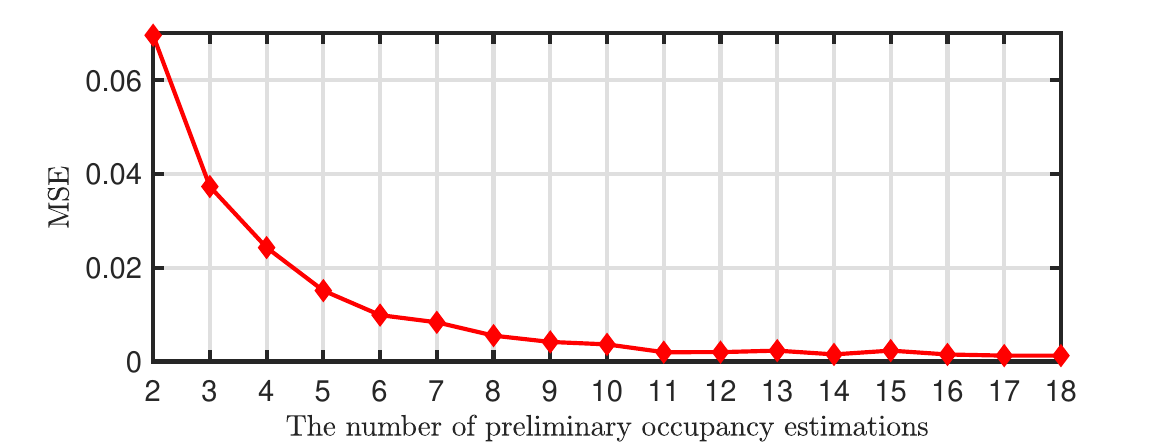}
\caption{MSE as a function of the number of preliminary occupancy estimations.}
\label{GT_beam}
\end{figure}

Nevertheless, the "hard" voting strategy in \eqref{hard_vote}, along with the discrete sensing results \( [\hat{\SF}]_{i,j} \in \{0, 1\} \) and \( [\hat{\SF}_{\theta_{d}}]_{i,j} \in \{0, 1\} \), is inherently non-differentiable, which may lead to discontinuous gradients. In other words, the applicability of \eqref{hard_vote} and the function \( f_{\OCC} \) in gradient-based optimization methods for solving \eqref{reconstruction_problem} is limited, as these methods rely on smooth and well-defined gradients for backpropagation. To address this challenge, we adopt a scaling method for preprocessing and a segmentation approach inspired by \cite{hu2024towards}, where \( [{\PsiB}_{\SF}']_{i,j,d} = 1 \) if the grid is occupied by obstacles, and \( [{\PsiB}_{\SF}']_{i,j,d} \in [\Psi_{\smin}, \Psi_{\smax}] \) represents the scaled \ac{rss} in the case of unoccupied regions. The occupancy estimation can then be obtained using the following rule:

\begin{equation}
\begin{aligned}
 \,[\hat{\SF}]_{i,j} = \begin{cases}
0, & \quad 0 \leq [{\wh{\PsiB}}_{\SF}']_{i,j,d} \leq \Psi_{\smax} \\
1, & \quad \Psi_{\smax} < [{\wh{\PsiB}}_{\SF}']_{i,j,d} \leq 1
\end{cases}
\label{occupancy_probability}
\end{aligned}
\end{equation}

Here, when \( [{\wh{\PsiB}}_{\SF}']_{i,j,d} \in [\Psi_{\smin}, \Psi_{\smax}] \), it represents the scaled estimated \ac{rss} for unoccupied areas. Note that \( \Psi_{\smax} \) and \( \Psi_{\smin} \) are predefined thresholds, with \( \Psi_{\smax} \) specifically used for obstacle segmentation. Although the effectiveness of \eqref{occupancy_probability} has been evaluated in non-integrated scenarios \cite{hu2024towards}, it remains non-differentiable.

To overcome the aforementioned issues, we employ differentiable approximations by implementing a "soft" unanimous voting strategy and introducing probabilistic values. By allowing \( {\PsiB}_{\SF} \) to undergo the same preprocessing as \( {\PsiB}_{\SF}' \) and ensuring that the function \( f_{\RSS} \) produces the \ac{rss} estimation in the same form as \eqref{occupancy_probability} in~\cite{hu2024towards}, we directly interpret the preliminary sensing results \( \{ \hat{\SF}_{\theta_{d}} \}_{\theta_{d} \in \DC} \) as \( \{ [\wh{\PsiB}_{\SF}]_{d} \}_{d=1}^{N_{\dR}} \), where \( [\wh{\PsiB}_{\SF}]_{i,j,d} \in [0, 1] \), and function \( f_{\iR} \) acts as the identity function. We then redefine the function \( f_{\ENS}: [0,1]^{N_{\LR} \times N_{\WR} \times N_{\dR}} \to [0,1]^{N_{\LR} \times N_{\WR}} \), and the "soft" voting strategy is formulated as follows:

\begin{equation}
\begin{aligned}
\hat{\SF} = f_{\ENS}(\{ \hat{\SF}_{\theta_{d}} \}_{\theta_{d} \in \DC}) = \max\left( 0, \frac{1}{N_{\dR}} \sum_{\theta_{d} \in \DC} \hat{\SF}_{\theta_{d}} - \Psi_{\smax} \right).
\label{soft_vote}
\end{aligned}
\end{equation}

According to \eqref{soft_vote}, when \( [\hat{\SF}]_{i,j} = 0 \), it indicates that the grid is unoccupied by obstacles. In contrast, when \( [\hat{\SF}]_{i,j} > 0 \), it means occupancy, with the magnitude reflecting the confidence level of the occupancy prediction. A higher value indicates greater certainty, allowing for a more nuanced representation of categorical information within the probability space. From \eqref{soft_vote}, it is evident that "soft" unanimous voting for obstacle occupancy is achieved through averaging and segmentation. As a result, the function \( f_{\OCC} \) can be simply represented as:
\[
f_{\OCC}(\wh{\PsiB}_{\SF}) = f_{\ENS}(\wh{\PsiB}_{\SF}) = f_{\ENS}(f_{\RSS}(\wt{\PsiB}_{\SF})) = \hat{\SF}.
\]
This approach enables differentiable obstacle sensing, allowing the use of gradient-based optimization techniques in subsequent processing tasks.

\subsection{Integrated Problem Formulation}
\label{subsec:intr}
As the preprocessing method from~\cite{hu2024towards} is employed, we leverage an additional technique from \cite{hu2024towards} to improve the integration of \ac{thz} radio map construction and obstacle sensing. This technique involves using $\PsiB_{\SF}$ as an extra weight tensor, which is multiplied by $(\PsiB_{\SF}-\wh{\PsiB}_{\SF})$. In this approach, the highest weight (i.e., $1$) is assigned to grid cells occupied by obstacles, while weights in the range $[\Psi_{\smin},\Psi_{\smax}]$, determined by the \ac{rss} value, are allocated accordingly. This reflects the fact that \ac{isac} systems typically prioritize areas with obstacles and hotspot regions exhibiting higher \ac{rss} values when exploiting radio environments. Building upon this, and considering the voting-based sensing scheme described in~\eqref{soft_vote} and the use of cross-entropy for dissimilarity quantification, problem \eqref{reconstruction_problem} can be reformulated as:

\begin{subequations}\label{final_problem}
\begin{align}
    \underset{f_{\RSS},\,f_{\OCC}}{\mathrm{minimize}}~&\Eb_{\SF}\left[B(\PsiB_{\SF},\wh{\PsiB}_{\SF})+ \eta C(\SF, \hat{\SF})\right]\!, \label{final_problem_obj} \\
    \text{s.t.}~& \hat{\SF}=f_{\OCC}(\wh{\PsiB}_{\SF})=f_{\ENS}(\wh{\PsiB}_{\SF}),\forall\SF, \label{const:9b}\\
    & \eqref{const:2b}\eqref{const:2c}
\end{align} 
\end{subequations}

Here, $B(\PsiB_{\SF},\wh{\PsiB}_{\SF})=\|\PsiB_{\SF}\odot(\PsiB_{\SF}-\wh{\PsiB}_{\SF})\|_{\FR}^{2}$ and $C(\SF, \hat{\SF})=N_{\dR}\sum_{i\in\Zb^{+}_{N_{\LR}}}\sum_{j\in\Zb^{+}_{N_{\WR}}}\left([\SF]_{i,j}\log[\hat{\SF}]_{i,j}+\left(1-[\SF]_{i,j}\right)\log\big(1-\right.\\\left.[\hat{\SF}]_{i,j}\big)\right)$. 

From~\eqref{final_problem}, it is evident that our integration problem is essentially a weighted combination of regression and classification tasks, which introduces a higher level of complexity. Specifically, the intractable optimization of problem \eqref{final_problem} presents challenges due to the interaction between these two tasks. The optimization of \ac{dnn} models must strike a careful balance between their competing and/or complementary objectives. Furthermore, the models must learn effective representations that perform well across both tasks while avoiding overfitting. In particular, \ac{mse} is well-suited for \ac{thz} radio map construction due to its alignment with continuous \ac{rss} outputs, while cross-entropy is more appropriate for obstacle senssing as it quantifies the divergence between predicted and true probability distributions with respect to occupancy. Additionally, $\eta$ serves not only as a weight to balance the sensing and construction tasks but also to mitigate the negative effects of integration errors, addressing the error probability $\epsilon$ in the voting strategy. This allows $C(\SF, \hat{\SF})$ to effectively contribute to the overall optimization.

\begin{figure*}[!t]
\centering
\includegraphics[width=\linewidth] {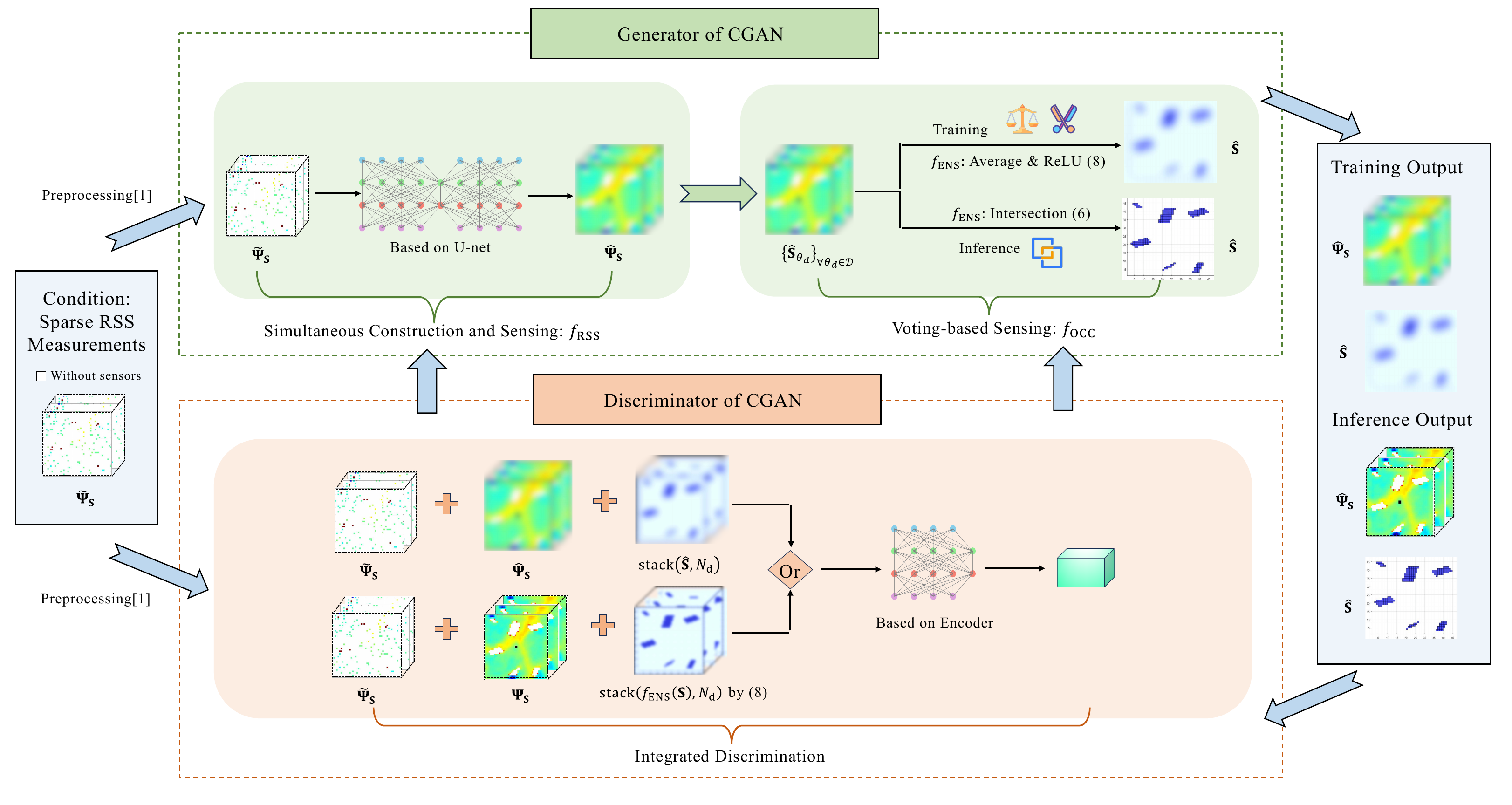}
\vspace{-20pt} 
\caption{\centering The proposed CGAN architecture for integrated THz radio map construction and obstacle sensing.}
\label{overall_framework}
\end{figure*}

\section{CGAN-based Framework for Integration}
\label{sec:prop}  
To address the complex problem outlined in~\eqref{final_problem}, we leverage a deep learning model based on \acp{cgan}, utilizing a finite set of environmental scenarios derived from historical measurements. Specifically, we propose a \ac{cgan}-based framework for solving the optimization problem in \eqref{final_problem_obj}. This approach involves offline adversarial training combined with online inference, where "soft" and "hard" voting strategies are employed to achieve a well-performing model for integrated tasks, such as directional \ac{thz} radio map construction and obstacle sensing. It is important to highlight that the proposed \acp{cgan} framework is adept at capturing the complex relationships between obstacle layouts, sparse \ac{rss} measurements, and directional \ac{thz} radio maps. This capability arises from \ac{cgan}'s strong generalization power and its integration approach, which extends beyond conventional techniques. In the context of generative models, \acp{gan} strikes a balance between accuracy, efficiency, and stability due to its conditional nature, making it a robust choice for learning the \ac{thz} signal propagation mechanism in relation to varying obstacle distributions. Moreover, the adversarial training process of \ac{cgan} reflects the synergistic enhancement between map construction and obstacle sensing, offering a solution to the challenge of balancing regression and classification tasks. Consequently, the proposed \ac{cgan}-based framework demonstrates excellent generalization performance across diverse target environmental scenarios.


\subsection{The Proposed CGAN Architecture}
The \ac{cgan} structure~\cite{isola2017image} can be utilized to address the objective \eqref{final_problem_obj} by performing the two integrated tasks. In line with \ac{cgan}, the proposed architecture consists of two key sub-networks: the generator \( G \) and the discriminator \( D \), as illustrated in Fig.~\ref{overall_framework}. To mitigate the computational complexity of our architecture and relax the framework's dependency on the number \( N_{\dR} \) of beams during inference, we interpret the beam dimension of \( \PsiB_{\SF} \) as the batch dimension rather than as a structural component. 
For a given environmental scenario, the generator \( G \) treats the \ac{rss} measurements (following the same preprocessing steps and predefined threshold \( \Psi_{\smax} \) for obstacle segmentation as in \cite{hu2024towards}) as the conditional input. It then generates \( N_{\dR} \) directional \ac{thz} radio maps and obstacle sensing results during training or inference, i.e., \( \{\wh{\PsiB}_{\SF}, \hat{\SF}\} = G(\wt{\PsiB}_{\SF}; \mathbf{\Theta}_{g}) \), where \( \mathbf{\Theta}_{g} \) represents the model parameters. The discriminator \( D \) functions as a classifier, assessing the plausibility of the outputs generated by \( G \) in relation to the real directional \ac{thz} radio maps and the obstacle layout. The two sub-networks are trained in an adversarial manner to solve the problem outlined in \eqref{final_problem}. This adversarial training not only ensures that the outputs of the generator \( G \) are realistic, but also enhances the generalization capability of \( G \) (see Sec.~\ref{sec:offline} for details). In addition to the adversarial training, the structures of both \( G \) and \( D \) are carefully designed to further improve generalization, as discussed below.

\subsubsection{CGAN Generator}
The generator \( G \) is composed of two modules: the simultaneous construction and sensing module, which implements the function \( f_{\RSS} \), and the voting-based sensing module, which performs the function \( f_{\ENS} \). The simultaneous construction and sensing module employs an enhanced encoder-decoder structure, specifically U-Net~\cite{ronneberger2015u}, to effectively capture the low-dimensional manifold nature of directional \ac{thz} radio maps. Furthermore, according to~\eqref{occupancy_probability}, this module simultaneously enables the segmentation of obstacles from the constructed radio maps \( \{[\wh{\PsiB}_{\SF}]_d\}_{d=1}^{N_{\dR}} \) in each direction, although obstacle segmentation is not performed during the integration phase. As shown in Fig.~\ref{network}, the architecture of this module consists of an approximately symmetric encoder-decoder structure with skip connections. The encoder compresses each element of the batch, i.e., the conditional input \( [\wt{\PsiB}_{\SF}]_d \), into a low-dimensional vector, while the decoder transforms this vector into the estimated radio map \( [\wh{\PsiB}_{\SF}]_d \).

It is important to note that the encoder-decoder structure is well-suited for directional \ac{thz} radio map construction because the radio maps are distributed on a low-dimensional manifold embedded within a high-dimensional space \cite{9523765}. This manifold can be represented by the aforementioned low-dimensional vector, which enhances the generalization ability of the radio map construction module when properly trained on target environmental scenarios. In addition, the skip connection in the U-Net enables the network to accommodate various receptive field sizes, facilitating the sensing of obstacles of different sizes. Since \acp{cnn} exploit the spatial local correlations of radio maps and obstacle layouts with translational invariance, the U-Net structure is implemented using 2-dimensional (2D) \acp{cnn}, as shown in Fig.~\ref{network}. Furthermore, the sigmoid activation function, \( \sigmoid(x) = (1 + e^{-x})^{-1} \), is applied after the final convolutional layer to ensure that \( [\wh{\PsiB}_{\SF}]_{i,j,d} \in [0, 1] \).

The voting-based sensing module is designed to implement both the "soft" and "hard" voting strategies, as defined in \eqref{soft_vote} and \eqref{hard_vote}, to generate more accurate obstacle sensing results during training and inference, respectively. In particular, the "soft" voting strategy is utilized to facilitate gradient-based optimization during \ac{cgan} training, where \( \max(0,x) \) in \eqref{soft_vote} is achieved using the ReLU activation function. In contrast, during inference, where gradients are not required, the "hard" voting strategy is employed, which does not compromise the performance of the system. As depicted in the proposed architecture shown in Fig. \ref{overall_framework}, this module explicitly features separate branches for the different voting strategies. The generator's outputs are accordingly induced, providing both the estimated radio maps \( \wh{\PsiB}_{\SF} \) and the associated occupancy confidence levels during training, as well as discrete estimations during inference for all grid points. Furthermore, the input of this module corresponds to the output of the simultaneous construction and sensing module. Specifically, we set \( \{\hat{\SF}_{\theta_{d}}\}_{\theta_{d}\in\DC} = \{[\wh{\PsiB}_{\SF}]_d\}_{d=1}^{N_{\dR}} \) based on the preprocessing method and subsequent segmentation approach described in~\cite{hu2024towards}, as discussed in Section~\ref{subsec:vote}. In summary, this module performs environmental semantics extraction via the voting strategy \( f_{\ENS} \) applied to the set \( \{\hat{\SF}_{\theta_{d}}\}_{\theta_{d}\in\DC} \), which is separately sensed by the U-Net module.


\subsubsection{CGAN Discriminator}
The discriminator \( D \) with parameters \( \mathbf{\Theta}_d \) aims to discriminate whether its input consists of synthetic data from the generator or actual data from the available historical measurements during offline training. Furthermore, the discrimination results are utilized to ensure the generalization capability and assist in improving the generation capability. Therefore, considering our integrated tasks for \ac{thz} radio map construction and obstacle sensing, the discriminator of \ac{cgan} is designed as an integration discrimination module in Fig.~\ref{overall_framework}, where the input data includes the conditional term \( \wt{\PsiB}_{\SF} \), the regression terms \( \wh{\PsiB}_{\SF} \) versus \( \PsiB_{\SF} \), and the classification terms \( \hat{\SF} \) versus \( \SF \). That is, with the assistance of \( \wt{\PsiB}_{\SF} \), the discrimination of \ac{thz} radio maps and obstacles is conducted. Specifically, we set \( D(\wh{\PsiB}_{\SF}, \stack(\hat{\SF}, N_{\dR}) \mid \wt{\PsiB}_{\SF}; \mathbf{\Theta}_d) \) and \( D(\PsiB_{\SF}, \stack(f_{\ENS}(\SF), N_{\dR}) \mid \wt{\PsiB}_{\SF}; \mathbf{\Theta}_d) \) for the corresponding discrimination, where \( f_{\ENS}(\SF) = f_{\ENS}(\{\SF\}_{\theta_d \in \DC}) \) is calculated by \eqref{soft_vote} to contrast with \( \hat{\SF} \), while \( \stack(\hat{\SF}, N_{\dR}) \) and \( \stack(f_{\ENS}(\SF), N_{\dR}) \) are used to keep the dimension consistent with multi-directional \ac{thz} radio maps \( \PsiB_{\SF} \). For notational simplicity, the outputs of \( D \) are designated as \( D(\wh{\PsiB}_{\SF}, \hat{\SF} \mid \wt{\PsiB}_{\SF}; \mathbf{\Theta}_d) \) and \( D(\PsiB_{\SF}, f_{\ENS}(\SF) \mid \wt{\PsiB}_{\SF}; \mathbf{\Theta}_d) \). In addition, as depicted in Fig.~\ref{network}, the discriminator structure exploits the encoder part of the module \( f_{\RSS} \).

\begin{figure}[!t]
\center
\includegraphics[width=8.5cm,height=6.5cm] {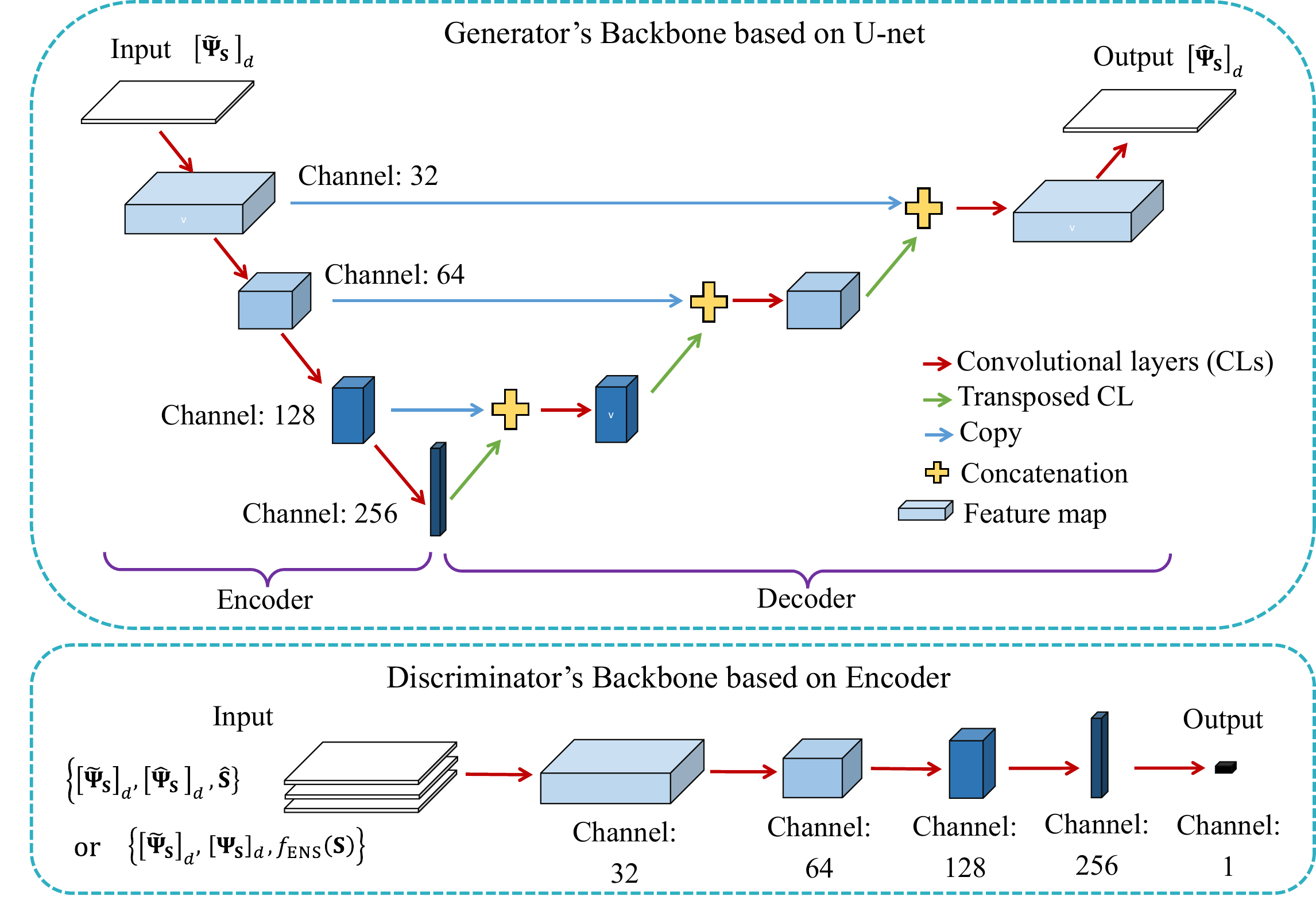}
\caption{The network structure of the proposed CGAN architecture.}
\label{network}
\end{figure}

\subsection{Off-Line Adversarial Training}
\label{sec:offline}

\addtolength{\topmargin}{0.03in}
 
To obtain an approximate solution to problem \eqref{final_problem}, one need to train the generator $G$ and discriminator $D$ in the proposed CGAN architecture with a finite set of historical measurements, where $N_{\tR}$ samples with the form $\{\PsiB_{\SF}^{t},\SF^{t}\}$ are available. To this end, problem \eqref{final_problem} is approximatedly reformulated as the following empirical version with  $\mathbf{\Theta}_{g}$ as the optimization variable:
\begin{equation} \label{trainining_problem}
\begin{aligned}
    \underset{\mathbf{\Theta}_{g}}{\mathrm{minimize}}\,\,V(G) &=\frac{1}{N_{\tR}}\sum_{t=1}^{N_{\tR}}\left(B(\PsiB_{\SF}^{t},\wh{\PsiB}_{\SF}^{t})+\eta C(\SF^{t}, \hat{\SF}^{t})\right), 
\end{aligned} 
\end{equation} where $\{\wh{\PsiB}_{\SF}^{t},\hat{\SF}^{t}\}=G(\wt{\PsiB}_{\SF}^{t};\mathbf{\Theta}_{g})$, and other constraints are absorbed into \eqref{trainining_problem}. Note that $\wt{\PsiB}_{\SF}^{t}$ is obtained by randomly sampling $K\times N_{\dR}$ RSS values from $\PsiB_{\SF}^{t}$, where $\frac{K}{N_{\WR}N_{\LR}}$ is the sampling rate for each THz radio map. A straightforward approach to solving problem \eqref{trainining_problem} is to train the generator $G$ alone with the same training objective. However, this approach may lead to deficient generalization performance, due to the lack of adversarial guidance, which prevents $G$ from capturing the distributions of $\PsiB_{\SF}$ and $\SF$.

The training procedure of $G$ and $D$ in CGAN follows an iterative process, where the two networks are optimized alternately and adversarially to each other based on problem \eqref{trainining_problem}, resembling a two-player zero-sum minimax game. The idea is that the generator $G$ is trained to generate the samples closer to actual data to fool $D$, and the discriminator $D$ is trained to accurately discriminate the generated and actual data so as not to be fooled. Specifically, at each iteration of the training process, $G$ and $D$ are updated by alternatively approximating the following sub-problems with multi-step stochastic gradient descent:
\begin{subequations}
\begin{align}
    &\underset{\mathbf{\Theta}_{g}}{\mathrm{minimize}}\,\,\, \alpha V(G)-\beta_{1} V_{1}(G,D),
    \label{trainining_loss_G} \\
    &\underset{\mathbf{\Theta}_{d}}{\mathrm{maximize}}\,\,\, V_{2}(G,D)- V_{1}(G,D)-\beta_{\PR} V_{\PR}(G,D),
    \label{trainining_loss_D}
\end{align} 
\end{subequations}
where $\alpha$, $\beta_{1}$, and $\beta_{\PR}$ are the weights of different term. Note that the terms $V_{1}(G, D)$ and $V_{2}(G, D)$ are derived from CGAN for generalization enhancement, and are given as 
\begin{align}
    V_{1}(G,D)=\frac{1}{N_{\tR}}\sum_{t=1}^{N_{\tR}}D(\wh{\PsiB}_{\SF}^{t},\hat{\SF}^{t}\vert\wt{\PsiB}_{\SF}^{t}; \mathbf{\Theta}_{d})
    \label{trainining_loss_G_sub},
\end{align} 
\begin{align}
    V_{2}(G,D)=\frac{1}{N_{\tR}}\sum_{t=1}^{N_{\tR}}D(\PsiB_{\SF}^{t},f_{\ENS}(\SF^{t})\vert\wt{\PsiB}_{\SF}^{t}; \mathbf{\Theta}_{d}).
    \label{trainining_loss_D_sub}
\end{align} 

Moreover, the penalty term $V_{\PR}(G,D)$ with $\beta_{\PR}$ is introduced to improve the training stability \cite{gulrajani2017improved} of CGAN, and 
\begin{align}
V_{\PR}(G,D)=&\frac{1}{N_{\tR}}\sum_{t=1}^{N_{\tR}}\big[(\|\nabla_{\Upsilon}D(\Upsilon\vert\wt{\PsiB}_{\SF}^{t}; \mathbf{\Theta}_{d})\|_{2}-1)^{2} ,
\label{trainining_loss_D_pan}
\end{align}where $\Upsilon=\{\wh{\PsiB}_{\SF}^{t'},\hat{\SF}^{t'}\}$, $\wh{\PsiB}_{\SF}^{t'}=\gamma \wh{\PsiB}_{\SF}^{t}+(1-\gamma)\PsiB_{\SF}^{t}$ and $\hat{\SF}^{t'}=\gamma \hat{\SF}^{t}+(1-\gamma)f_{\ENS}(\SF^{t})$. Note that $\gamma$ is a random variable that follows a uniform distribution over the interval $[0,1]$.

As observed from \eqref{trainining_loss_G_sub} and \eqref{trainining_loss_D_sub}, the term \( V_1(G, D) \) evaluates the performance of the generator by assessing the discrimination between the generated radio maps and sensed obstacles, while the term \( V_2(G, D) \) indicates the direction in which the generator should progress, based on the discrimination between the true maps and obstacles. Comparing \eqref{trainining_loss_G} and \eqref{trainining_loss_D}, the generator seeks to minimize the difference between \( V_1(G, D) \) and \( V_2(G, D) \) through training, whereas the discriminator aims to maximize the difference between these terms, reflecting the adversarial nature of the training objectives. Thus, following the principle of alternative training, the discriminator \( D \) serves as a metric for assessing the progress of \( G \), providing feedback on how well the generated radio maps and obstacle sensing results align with the ground truth. This dynamic drives the competitive interaction that leads to improved generation and generalization performance.

The detailed training procedure, summarized in Algorithm~\ref{algo:gantrain}, is similar to that in~\cite{goodfellow2014generative}. It involves multiple training epochs, where the generator \( G \) is trained using multi-step stochastic gradient descent to approximate \eqref{trainining_loss_G}, with \( D \) fixed. This is followed by the training of the discriminator \( D \), which is optimized to approximate \eqref{trainining_loss_D} with \( G \) fixed, until an equilibrium point is reached. It is worth noting that the proposed training objectives, including the term \( V(G) \) and the \ac{gan}-related terms, effectively guide the generator \( G \) toward solving the problem in \eqref{final_problem} with desirable generalization performance. This results in the generation of precise \ac{thz} directional radio maps and accurate obstacle sensing for the target environmental scenarios.

\begin{algorithm}
\caption{Off-Line Adversarial Training With CGAN-based Framework for Integration}
\label{algo:gantrain}
\begin{algorithmic}[1]%
\STATE \textbf{Initialize:} $\mathbf{\Theta}_{g}$, $\mathbf{\Theta}_{d}$, $\alpha$, $\beta_{1}$, $\beta_{\PR}$, and a number $\zeta=0$. %
\STATE \textbf{for} number of training epochs \textbf{do}
\STATE \hspace{0.5em} \textbf{for} ${N_{\tR}}$ \textbf{do}
\STATE \hspace{1.2em} Sample $\PsiB_{\SF}^{t}$, $\wt{\PsiB}_{\SF}^{t}$, and $\SF^{t}$ from training data;%
\STATE \hspace{1.2em} \textbf{if} $\zeta==0$ \textbf{then}
\STATE \hspace{2.1em} Update $D$ by ascending its stochastic gradient de-
\STATE \hspace{2.1em} rived from the minibatch loss aligned with \eqref{trainining_loss_D};%
\STATE \hspace{2.1em} Set $\zeta=1$;%
\STATE \hspace{1.2em} \textbf{else} 
\STATE \hspace{2.1em} Update $G$ by descending its stochastic gradient de-
\STATE \hspace{2.1em} rived from the minibatch loss aligned with \eqref{trainining_loss_G};%
\STATE \hspace{2.1em} Set $\zeta=0$.%
\STATE \textbf{Output:} The trained generator $G$.%
\end{algorithmic}
\end{algorithm}

\subsection{Online Inference}
The trained generator \( G \) in the proposed \ac{cgan} framework can be applied to any target environmental scenario \( \SF \), using a limited set of \ac{rss} measurements \( \wt{\PsiB}_{\SF} \) as input. Consequently, the outputs of \( G \) are the corresponding scaled \ac{thz} radio maps \( \wh{\PsiB}_{\SF} \) and the sensed obstacles \( \hat{\SF} \), obtained through the "hard" voting strategy in \eqref{hard_vote}. By applying the inverse operation of the preprocessing step, the \ac{thz} radio maps with actual \ac{rss} values can be recovered.

\begin{figure}[!t]
\centering
\includegraphics[height=4cm] {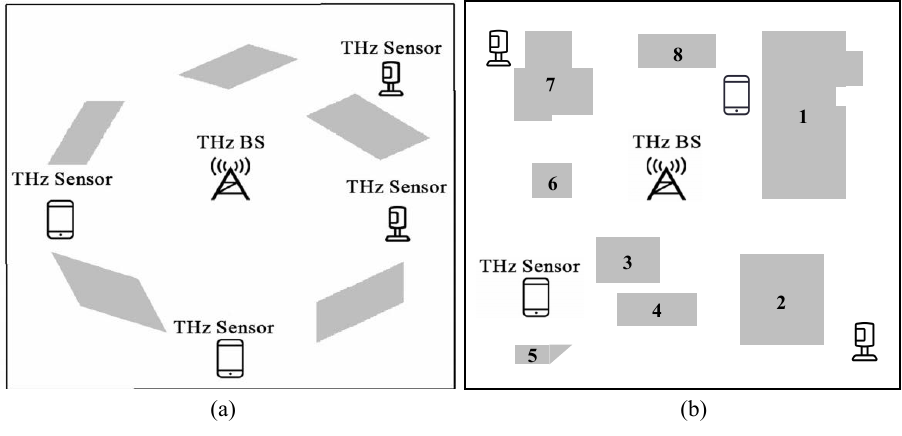}
\caption{(a) An example obstacle layout of the simulated environment; (b) The obstacle layout form the section of Oklahoma of Fig. \ref{new_scenario}(b), i.e. real-world city.}
\label{evaluation_scenario}
\end{figure}

\renewcommand{\arraystretch}{1.5} 
\begin{table}[htbp]
\caption{Scenarios details and model' hyper-parameters.}
\label{tab_evalution}
\begin{supertabular}{>{\centering\arraybackslash}m{0.18\textwidth}  
|>{\centering\arraybackslash}m{0.26\textwidth}} \toprule 
\textbf{Description of Parameter} & \textbf{Value} \\ \midrule
Scenario size & $L\times W=100\mR\times100\mR$  \\ \hline
Grid division & $N_{\LR}\times N_{\WR}=64 \times 64$  \\ \midrule
Number of obstacles & $\SC_1$: $\{1,2,3,4,5\}$; $\SC_2$: $\{6\}$; $\SC_3$: $\{8\}$ \\ \midrule
Number $N'$ of $\{\SF,\PsiB_{\SF}\}$ & $\SC_1$: $250$; $\SC_2$: $50$; $\SC_3$: $1$ \\ \midrule
Carrier frequency & $300\,\mathrm{GHz}$ \\ \midrule
Number of beams & $N_{\dR}=18$ \\ \midrule
Beamwidth & $\theta_{\bR}=20^{\circ}$ \\ \hline 
Angular separation & $\Delta\theta=20^{\circ}$ \\ \midrule
Sampling rate & $0.5$ \\ \midrule
Weights in \eqref{trainining_loss_G} and \eqref{trainining_loss_D} & $\alpha=1000$; $\beta_{1}=10$; $\beta_{\PR}=10$ \\ \midrule
Number of training epochs & $100$ \\ \midrule 
\multirow{2}{*}{Indicator $\eta$} & $\eta=1$ if \acs{cgan} has trained for at least\\  
 & $50$ epochs and $\eta=0$ otherwise\\\hline
 Learning rate & $0.0001$ \\ \bottomrule
\end{supertabular}
\end{table}

\section{Experimental Results}
\label{sec:res}

\subsection{Experimental Setup}



\begin{figure*}[t]
    \centering
    \subfloat[MSE of THz radio
map construction task]{%
        \includegraphics[width=0.33\textwidth]{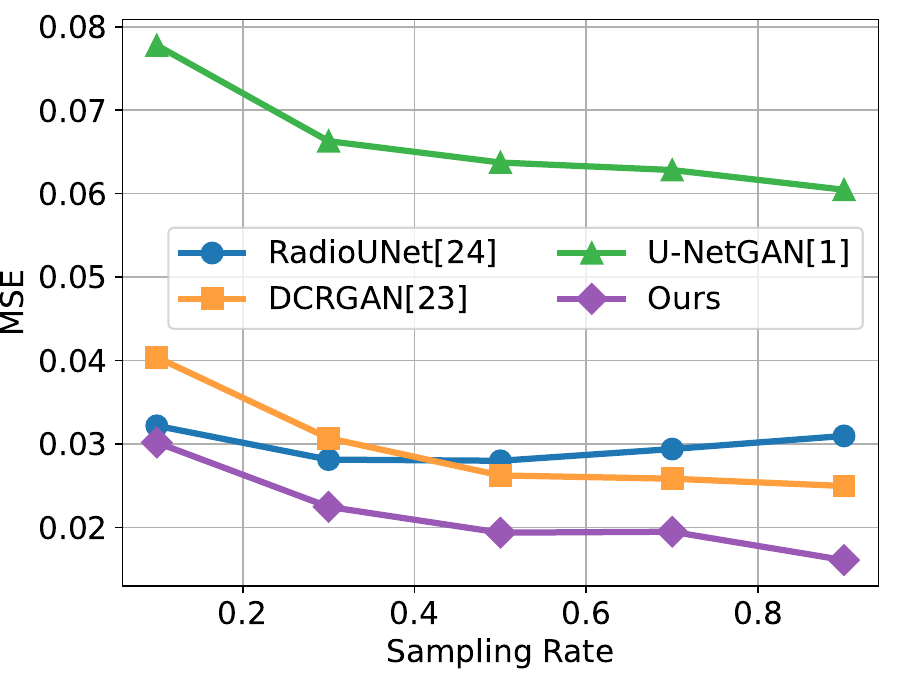}
        \label{fig:integration_evaCons}
    } 
    \subfloat[MSE of obstacle sensing task]{%
\includegraphics[width=0.33\textwidth]{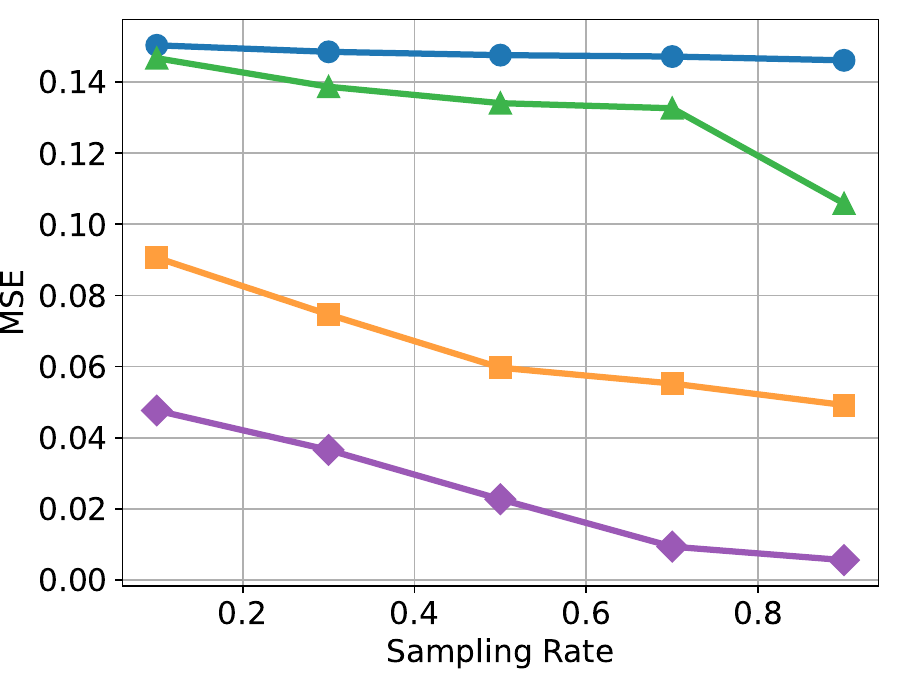}
        \label{fig:integration_evaobst}
    }
        \subfloat[AP of obstacle sensing task]{%
\includegraphics[width=0.33\textwidth]{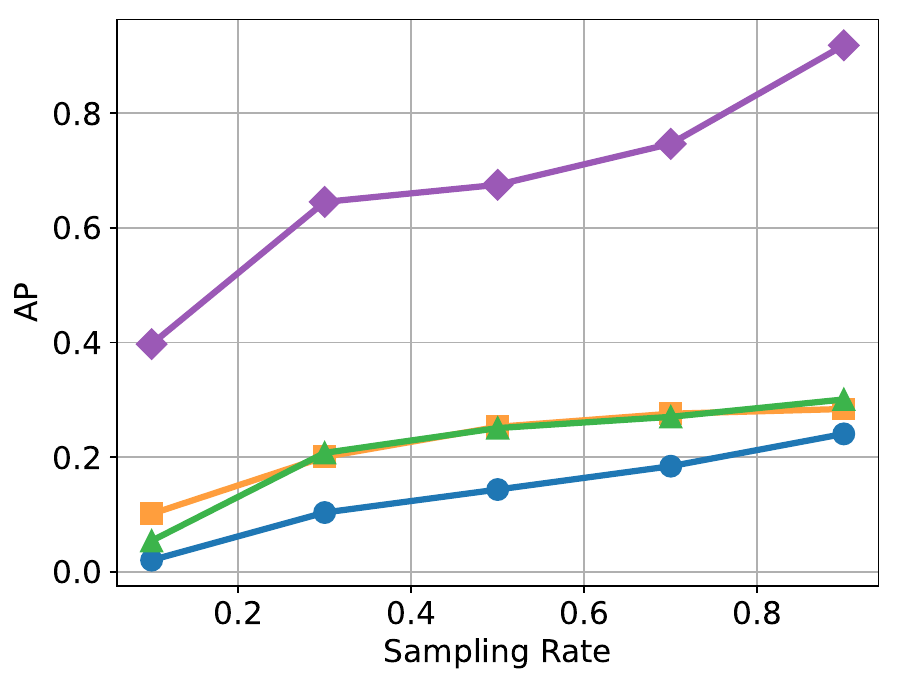}
        \label{fig:integration_evaAP}
    }
    \caption{Performance of different tasks versus sampling rate for different approaches evaluated in $\SC_2$.}
    \label{fig:integration_eva}
\end{figure*}


{\bf Datasets and Model Hyperparameters.} We consider three types of 2D scenarios, which include training and validation scenarios with a simulated environment, as well as a validation scenario based on a real-world city. These scenarios are labeled as \( \SC_1 \), \( \SC_2 \), and \( \SC_3 \), respectively. Fig.~\ref{evaluation_scenario}(a) presents an example scenario of a simulated environment, where obstacles have random quadrilateral shapes and positions, and the number of obstacles is also randomly determined. Fig.~\ref{evaluation_scenario}(b) illustrates the obstacle layout corresponding to Fig.~\ref{new_scenario}(b). For each scenario in \( \SC_1 \), \( \SC_2 \), and \( \SC_3 \), the corresponding \ac{thz} radio maps are generated through ray tracing~\cite{remcom}, which models the power attenuation function \( \Gamma_{\SF}(\xB_{i,j}, \theta_{d}, \theta_{\bR}) \) implicitly. Each ray is allowed to undergo at most one reflection or diffraction, as the \ac{thz} signal is subject to molecular absorption and significant path loss. Unless otherwise stated, the default parameters for ray tracing and the scenarios are provided in Table~\ref{tab_evalution}, which also includes the hyperparameters of the proposed \ac{cgan}-based framework. Based on the information in Table~\ref{tab_evalution}, the data (e.g., \( \{\PsiB_{\SF}^{t}, \SF^{t}\} \)) corresponding to \( \SC_1 \) is used for offline training, while data for \( \SC_2 \) and \( \SC_3 \) are used for online inference, which are employed to assess performance and generalization. \\\\
{\bf Baseline Models and Evaluation Metrics.} Existing approaches do not directly address the integration of \ac{thz} radio map construction and obstacle sensing. Therefore, we use U-Net-based and \ac{gan}-based radio map construction approaches~\cite{hu2024towards, hu20233d, levie2021radiounet} as baselines to simultaneously perform both tasks. These are denoted as U-NetGAN, DCRGAN, and RadioUNet, respectively. The obstacle sensing idea from~\cite{hu2024towards} is applied in these baseline models. Note that the method from \cite{10626572} is not included in the comparison, as it focuses on a federated learning model and environment-specific training and deployment, similar to \cite{8269065, 9500337, 9154223, 10041012, zeng2022uav, 10437033}. Once the proposed CGAN-based framework and the baselines are well trained on scenarios \( \SC_1 \), their performance is evaluated on  scenarios \( \SC_2 \) and \( \SC_3 \).

For evaluation metrics, we use the \ac{mse} with respect to both construction and sensing, specifically \( \frac{1}{N'} \sum \|\SF - \hat{\SF}\|_{\FR}^{2} \) and \( \frac{1}{N'} \sum \|\PsiB_{\SF} - \wh{\PsiB}_{\SF}\|_{\FR}^{2} \). Additionally, we also adopt the AP, originally designed for object detection \cite{padilla2020survey}, as a sensing metric, since AP used in \cite{hu2024towards} can be adapted to evaluate the overlap between the sensed and true obstacles from both recall and precision perspectives. In our simulations, AP is computed as the average of the approximate areas under the Precision-Recall (PR) curves for various intersection over union (IoU) thresholds ranging from $0.5$ to $0.95$ with a step of $0.05$, where each PR curve is obtained by plotting precision against recall at a given IOU threshold. As a result, a higher AP value indicates that obstacles are detected with minimal omissions and that non-obstacle regions are rarely misclassified as obstacles, making it a more suitable metric for assessing sensing performance and addressing the limitation of MSE, which focuses solely on pixel-wise accuracy and may provide an incomplete evaluation. Note that the cross-entropy in \eqref{trainining_problem} cannot be used as metric, as it only measures distribution differences. To account for the randomness in neural network initialization, each experiment is repeated five times, and the reported results correspond to the average performance over these runs.

\renewcommand{\arraystretch}{1.25} 
\begin{table*}[ht]
\centering
\caption{Impact of Different Integration Aspects\\ (mean $\pm$ standard deviation of MSE and AP over the last 10 epochs evaluated in $\SC_{2}$)}
\label{tab_integration}
\begin{supertabular}{>{\centering\arraybackslash}m{0.15\textwidth}|>{\centering\arraybackslash}m{0.25\textwidth} 
|>{\centering\arraybackslash}m{0.16\textwidth}|>{\centering\arraybackslash}m{0.16\textwidth}|>{\centering\arraybackslash}m{0.16\textwidth}} \toprule 
\multirow{2}{*}{\textbf{Integration Aspect}} &
\multirow{2}{*}{\textbf{Description}} &
\textbf{Construction for $\PsiB_{\SF}$}& 
\textbf{Sensing for $\SF$} &
\textbf{Sensing for $\SF$} \\
&&
\textbf{(MSE)}& 
\textbf{(MSE)} &
\textbf{(AP)} \\ \midrule
Proposed & With all integration aspects  & $\mathbf{0.0179\pm0.0008}$
 &  $\mathbf{0.0187\pm0.0025}$ & $\mathbf{0.7378\pm0.0330}$\\ \hline
\multirow{2}{*}{Training loss} & Without $B(\PsiB_{\SF},\wh{\PsiB}_{\SF})$ & $0.0807\pm0.0040$ & $0.1416\pm 0.0029$ & $0.0036\pm 0.0020$\\ 
& Without $C(\SF,\hat{\SF})$ & $ 0.0485\pm 0.0002$ & $0.1912\pm 0.0628$ & $0.1104\pm 0.0201$ \\ \midrule
\multirow{4}{*}{Discriminator} 
& Without $\wh{\PsiB}_{\SF}$ and $\PsiB_{\SF}$ & $0.0192\pm 0.0005$ & $0.0148\pm0.0016$ & $0.7991\pm 0.0210$\\
& Without $\hat{\SF}$  and $f_\ENS(\SF)$ & $0.0192\pm0.0013$ & $0.0214\pm 0.0042$ & $0.7024\pm 0.0411$\\
& $\wh{\PsiB}_{\SF}$ and  $\PsiB_{\SF}$ within $[\Psi_{\smin},\Psi_{\smax}]$ & $0.0177\pm 0.0014$ & $0.0215\pm0.0058$ & $0.6988\pm 0.0558$ \\
& $\wh{\PsiB}_{\SF}$ and  $\PsiB_{\SF}$ within $[\Psi_{\smax},1]$  & $0.0185\pm0.0012$ & $0.0183\pm0.0060$ &  $0.7592\pm 0.0463$\\ \midrule
Preprocessing & Without Preprocessing  & $0.0514\pm 0$ & $0.6575\pm0$ & $0\pm 0$\\ \midrule
Weight $\PsiB_{\SF}$ & Without weight $\PsiB_{\SF}$ & $0.0274\pm 0.0031$ & $0.0587\pm0.0105$ & $0.4066\pm 0.0652$\\  
\bottomrule
\end{supertabular}
\end{table*}


\begin{figure*}[t]
    \centering
    \subfloat[MSE of THz radio
map construction task]{%
        \includegraphics[width=0.33\textwidth]{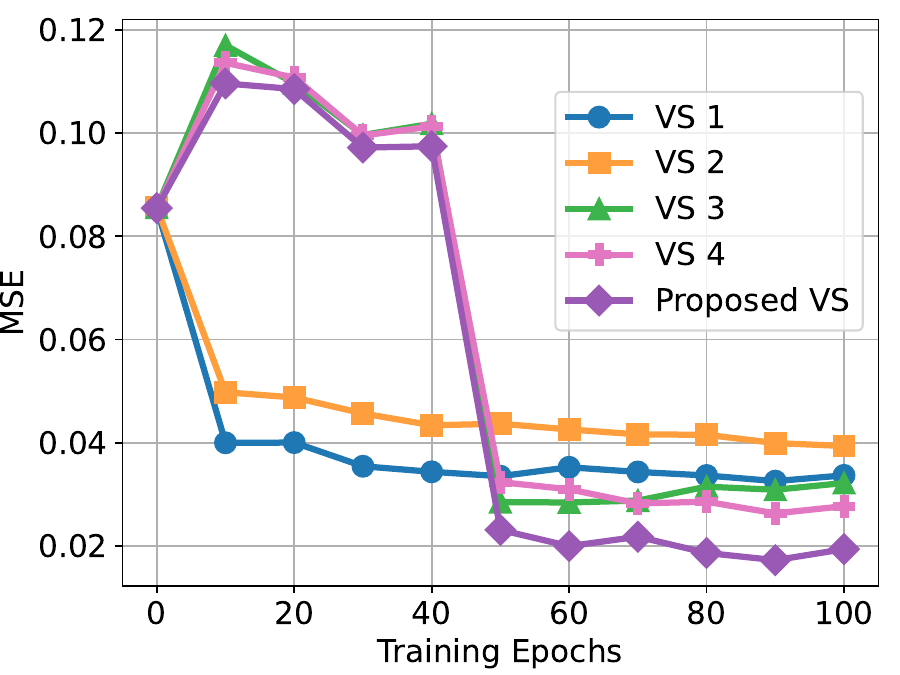}
        \label{fig:voting_evaCons}
    } 
    \subfloat[MSE of obstacle sensing task]{%
\includegraphics[width=0.33\textwidth]{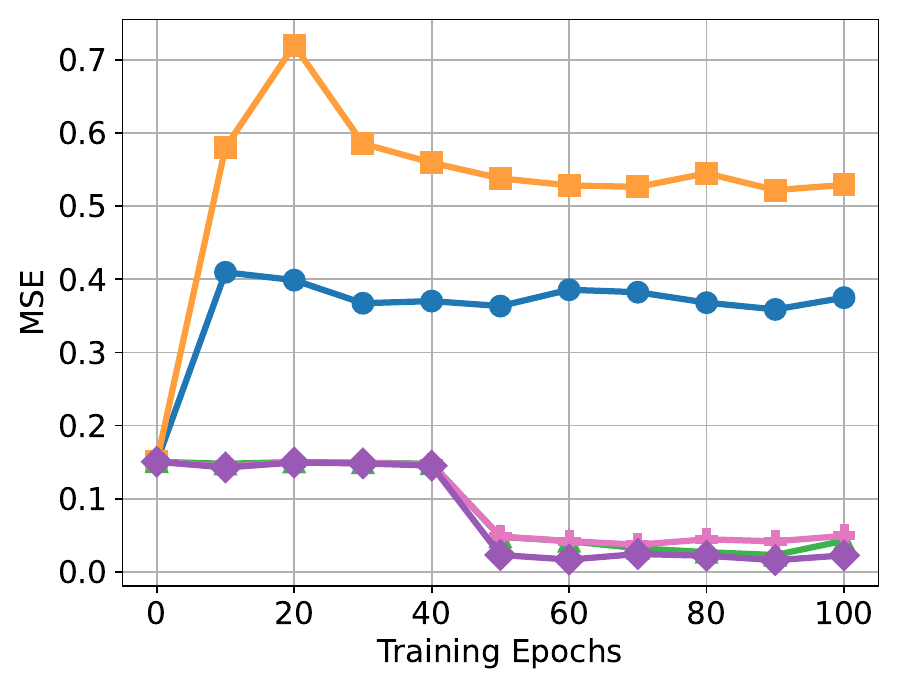}
        \label{fig:voting_evaobst}
    }
        \subfloat[AP of obstacle sensing task]{%
\includegraphics[width=0.33\textwidth]{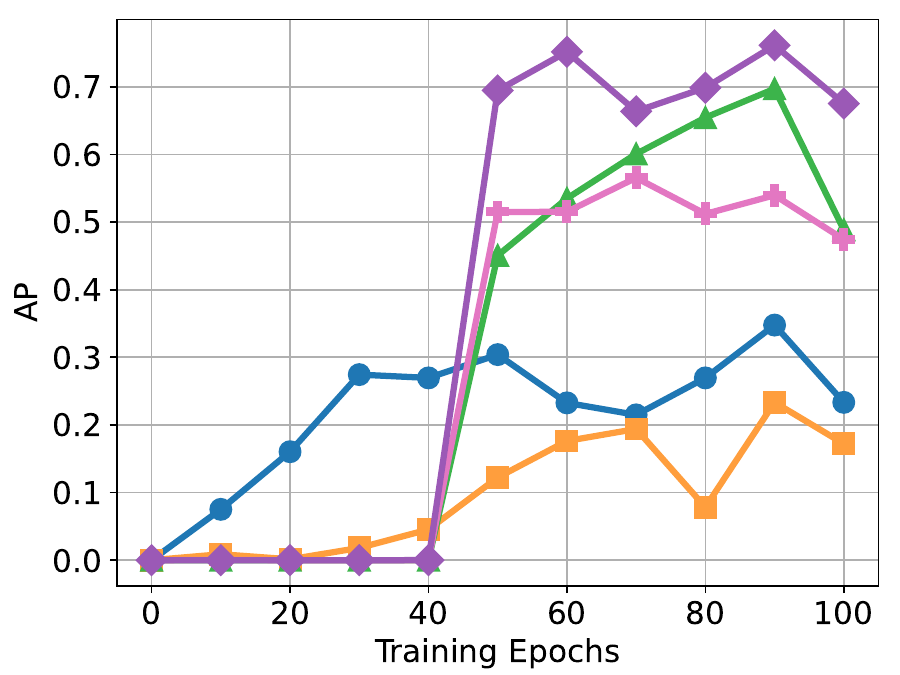}
        \label{fig:voting_evaAP}
    }
    \caption{Performance of different tasks versus training epochs for different voting strategies evaluated in $\SC_2$. VS: Voting Strategy. }
    \label{fig:voting_eva}
\end{figure*}
\subsection{Effectiveness of Integration}
To evaluate the performance of the proposed integration framework against the baseline models, Fig. \ref{fig:integration_evaCons} and \ref{fig:integration_evaobst} present the \ac{mse} with respect to \ac{thz} radio map construction and obstacle sensing as a function of the sampling rate, while Fig.~\ref{fig:integration_evaAP} illustrates the AP  to further evaluate sensing performance under the same configurations. As can be seen from Fig.~\ref{fig:integration_eva}, the MSE in both tasks generally decreases with an increase in sampling rate, whereas AP exhibits the opposite trend, increasing as the sampling rate rises, since a higher AP value reflects greater overlap and thus better sensing of obstacles’ location and shape. Notably, the proposed framework consistently outperforms the non-integrated baselines under scenario \( \SC_2 \), which differs from the training scenario, not only in terms of MSE but also in AP. This result underscores the effectiveness of the integration approach and its corresponding generalization capability. Furthermore, the observation that U-NetGAN~\cite{hu2024towards} and DCRGAN~\cite{hu20233d}, which exhibit significantly different MSE values, achieve nearly identical AP scores highlights that MSE and AP each capture distinct aspects of the overall sensing performance. In addition, the \ac{mse} achieved by RadioUNet~\cite{levie2021radiounet} does not decrease with increasing sampling rates. This may be attributed to the fact that RadioUNet, which does not implement integration, leverages additional prior physical environmental maps that are not available in our integrated problem.

Table~\ref{tab_integration} compares the impact of various integration aspects on performance, based on the mean and standard deviation of \ac{mse} and AP during the last $10$ convergence epochs. In this table, the term "training loss" refers to the consideration of construction and sensing objectives during training, while the "discriminator" term evaluates the effect of different input components related to the generator \( G \) on performance in both tasks. In addition, the influences of preprocessing and the weight \( \PsiB_{\SF} \) are also assessed. The results indicate that the proposed framework achieves a well-balanced overall performance in terms of both the mean and the deviation. Some integration aspects may perform better in one task but underperform in another. This balance provides flexibility in adjusting the integration approach based on task-specific priorities during deployment. For example, when RSS information (i.e., $\wh{\PsiB}_{\SF}$ and $\PsiB_{\SF}$) is excluded from the discriminator, both the construction performance and the obstacle sensing performance show smaller standard deviations, suggesting improved stability. Moreover, the sensing capability, as captured by the mean values of AP and MSE, improves simultaneously. However, the construction capability reflected by mean MSE deteriorates. Across various integration aspects, the trends in AP and MSE remain notably similar, despite their reflection of different facets of sensing performance.

\subsection{Impact of Voting}

\renewcommand{\arraystretch}{1.25} 
\begin{table*}[ht]
\centering
\caption{Impact of Beam Characteristics for Voting  \\(mean $\pm$ standard deviation of MSE and AP over the last 10 epochs evaluated in $\SC_{2}$).}
\label{tab_voting}
\begin{supertabular}{>{\centering\arraybackslash}m{0.15\textwidth}|>{\centering\arraybackslash}m{0.25\textwidth} 
|>{\centering\arraybackslash}m{0.16\textwidth}|>{\centering\arraybackslash}m{0.16\textwidth}|>{\centering\arraybackslash}m{0.16\textwidth}} \toprule 
\multirow{2}{*}{\textbf{Beam Characteristics}} &
\multirow{2}{*}{\textbf{Description}} &
\textbf{Construction for $\PsiB_{\SF}$} &
\textbf{Sensing for $\SF$} &
\textbf{Sensing for $\SF$} \\ 
&&
\textbf{(MSE)}& 
\textbf{(MSE)} &
\textbf{(AP)} \\ \midrule
Proposed & $\theta_{\bR}=20^{\circ} $ and $N_{\dR}=18$ (default)   &  $\mathbf{0.0179\pm0.0008}$
 &  $\mathbf{0.0187\pm0.0025}$ &  $\mathbf{0.7378\pm0.0330}$\\ \midrule
\multirow{2}{*}{Beamwidth $\theta_{\bR}$} 
& $\theta_{\bR}=15^{\circ} $ & $0.0183\pm0.0007$ & $0.0208\pm0.0037$&  $0.7477\pm0.0394$\\ 
& $\theta_{\bR}=10^{\circ} $ & $0.0198\pm0.0023$ & $0.0251\pm0.0019$&  $0.6903\pm0.0199$\\ \midrule
\multirow{8}{*}{Number $N_{\dR}$ of beams}  
& $N_{\dR}=9$ (random)  & $0.0209\pm0.0020$ & $0.0357\pm0.0083$&  $0.5863\pm0.0553$\\
& $N_{\dR}=6$ (random) & $0.0198\pm0.0025$ & $0.0515\pm0.0195$&  $0.5215\pm0.0671$\\
& $N_{\dR}=3$ (random) & $0.0196\pm0.0011$ & $0.0883\pm0.0055$&  $0.3561\pm0.0343$\\
& $N_{\dR}=2$ (random) & $0.0247\pm0.0007$ & $0.1467\pm0.0134$&  $0.1887\pm0.0318$\\
& $N_{\dR}=9$ (regular)  & $0.0199\pm0.0011$ & $0.0326\pm0.0046$&  $0.6327\pm0.0350$\\
& $N_{\dR}=6$ (regular) & $0.0213\pm0.0009$ & $0.0461\pm0.0072$&  $0.5393\pm0.0455$\\
& $N_{\dR}=3$ (regular) & $0.0239\pm0.0009$ & $0.0827\pm0.0102$&  $0.3561\pm0.0446$\\
& $N_{\dR}=2$ (regular) & $0.0227\pm0.0008$ & $0.1280\pm0.0140$& $0.2326\pm0.0374$ \\
\bottomrule
\end{supertabular}
\end{table*}
In this section, we investigate the impact of different voting strategies on both construction and sensing performance, as illustrated in Fig.~\ref{fig:voting_eva}, where the \ac{mse} and AP are plotted against training epochs and achieved based on scenarios $\SC_{2}$. The four voting strategies, labeled $1$ through $4$, correspond to the following configurations: 
\begin{itemize}
\item Strategy 1: \( \eta = 1 - \max(0, 1 - \frac{\text{epoch}}{50}) \),
\item Strategy 2: \( \eta = 1 \), which is independent of epoch,
\item Strategy 3: No voting, i.e., \( \sum_{\theta_{d} \in \DC} C(\SF, \max(0, \hat{\SF}_{\theta_{d}} - \Psi_{\smax})) \) for training,
\item Strategy 4: \( f_{\ENS}(\{\hat{\SF}_{\theta_{d}}\}_{\theta_{d} \in \DC}) = \frac{1}{N_{\dR}}\!\! \sum_{\theta_{d} \in \DC} \max(0, \hat{\SF}_{\theta_{d}}\!\!-\!\!\Psi_{\smax}) \) for training.
\end{itemize}

\begin{figure*}[t]
    \centering
    \subfloat[MSE of THz radio
map construction task]{%
        \includegraphics[width=0.33\textwidth]{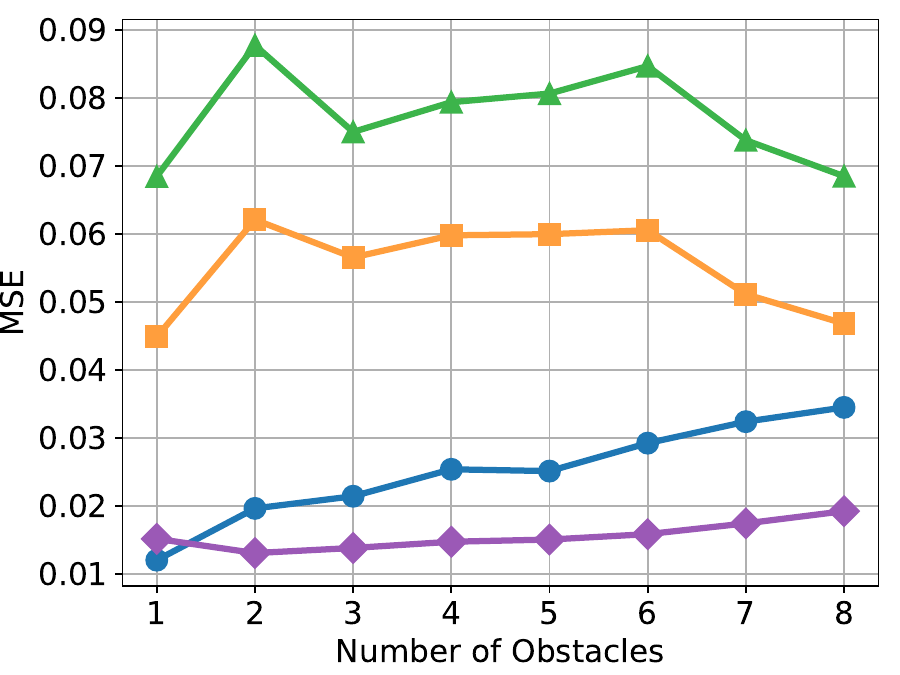}
        \label{fig:real_evaCons}
    } 
    \subfloat[MSE of obstacle sensing task]{%
\includegraphics[width=0.33\textwidth]{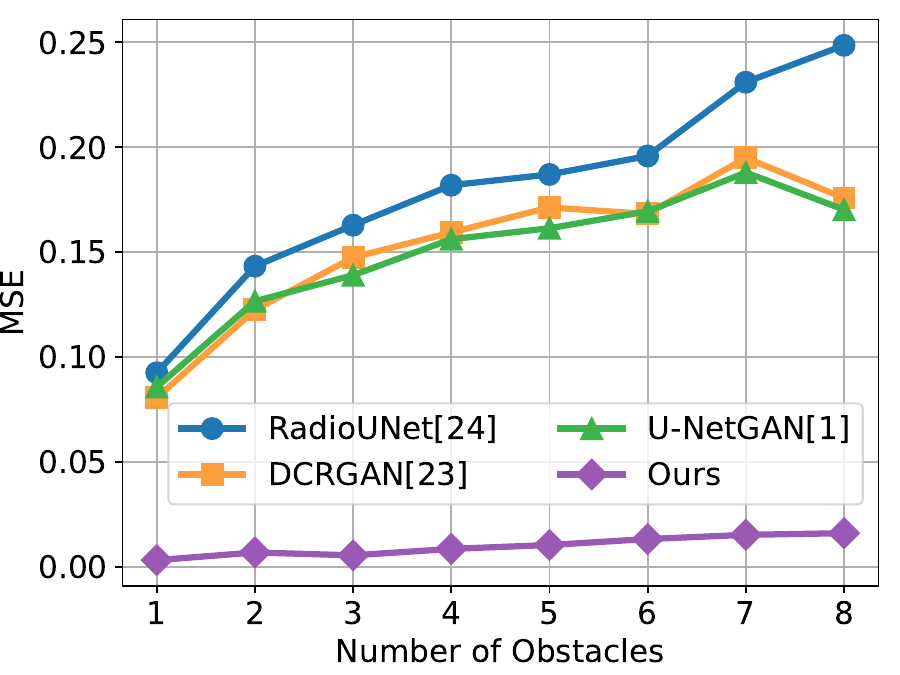}
        \label{fig:real_evaobst}
    }
        \subfloat[AP of obstacle sensing task]{%
\includegraphics[width=0.33\textwidth]{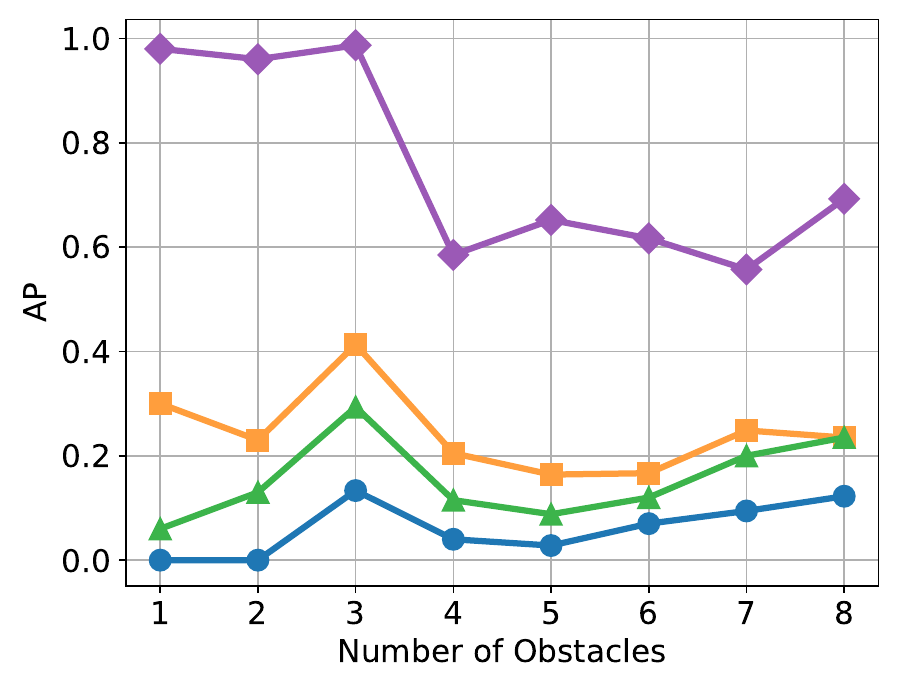}
        \label{fig:real_evaAP}
    }
    \caption{Performance of different tasks versus the number of obstacles for different approaches evaluated in real-world city scenario $\SC_3$.}
    \label{fig:real_eva}
\end{figure*}


It is evident that the proposed voting strategy achieves superior performance across both tasks, which validates the effectiveness of the design of \( \eta = \text{sign(epoch - 50)} \) for ensuring an error probability \( \epsilon < 0.5 \) and the unanimous voting enabled by equations~\eqref{hard_vote} and \eqref{soft_vote}. Moreover, the sharp changes in the \ac{mse} curves (with respect to the proposed strategy, as well as strategies 3 and 4) near the 50th epoch mark in Fig.~\ref{fig:voting_evaCons} and \ref{fig:voting_evaAP} highlight the critical role of integration in improving performance. To further understand the impact of the proposed voting-based sensing scheme with respect to \ac{thz} beam characteristics, we present the effects of beamwidth \( \theta_{\bR} \) and the number of beams \( N_{\dR} \) in Table~\ref{tab_voting}, using a fixed set \( \DC \) of 18 beam directions. It is observed that a larger beamwidth leads to better accuracy in terms of MSE, as the wider beam increases the likelihood of interactions between \ac{thz} signals and obstacles, thereby enabling the \ac{cgan} to extract more detailed obstacle information from \( \wh{\PsiB}_{\SF} \). This, in turn, improves the construction accuracy of \( \wh{\PsiB}_{\SF} \) through integration. Additionally, the results with respect to AP suggest that a moderate beamwidth may be preferable for AP-based sensing performance. 

Regarding the impact of \( N_{\dR} \), we consider both random and regular beam arrangements from \( \DC \), where the latter involves selecting \( N_{\dR} \) evenly spaced beams. Generally, a higher \( N_{\dR} \) results in more votes from \( \wh{\PsiB}_{\SF} \) and better MSE and/or AP performance in both tasks, which aligns with the observations in Fig.~\ref{GT_beam} and equation \eqref{ensemble_inequality}. Compared to using \(N_{\dR}=2\) directional THz radio maps for voting, increasing the number \( N_{\dR}\) of votes to $18$ leads to mean MSE reductions of up to 27.5\% and 87.3\% in the reconstruction and sensing tasks, respectively, while achieving a $3.91$-fold improvement in AP for sensing capability. This effect is particularly pronounced in obstacle sensing, as voting is specifically designed to improve this task, thereby aiding the proposed CGAN model in achieving a more comprehensive understanding of environmental semantics. Moreover, the regular arrangement performs better because evenly spaced beams provide more comprehensive coverage in terms of both communication and sensing.

\subsection{Performance in Real-World City Scenario}
Under the validation scenario \( \SC_{3} \) with a real-world city, we investigate the impact of the number \( N_{\IR} \) of obstacles on the proposed framework during real-world inference. Specifically, we sequentially remove obstacles from Fig.~\ref{evaluation_scenario}(b) and analyze the performance by plotting the  \ac{mse} for both construction and sensing tasks in Fig.~\ref{fig:real_evaCons} and \ref{fig:real_evaobst}, respectively, as well as the AP for sensing in Fig.~\ref{fig:real_evaAP}, all as functions of \( N_{\IR} \). Both tasks exhibit performance degradation, as reflected by an increase in MSE and/or a decrease in AP, with the rising complexity (i.e., \( N_{\IR} \)) of the scenario. Notably, the proposed framework maintains superior performance over the baselines across different  \( N_{\IR} \).

Fig.~\ref{fig:real_evaCons} and \ref{fig:real_evaobst} illustrate that the advantages of integration become more pronounced in complex scenarios, as evidenced by the widening performance gap between the proposed framework and the baseline models, further highlighting its strong generalization capability. This suggests that the proposed integration effectively mitigates the adverse effects of increased complexity, particularly in obstacle sensing with accuracy consideration. However, when considering AP, which reflects both the precision and recall of obstacle sensing, the above observations do not fully hold. As shown in Fig.~\ref{fig:real_evaAP}, when the scenario complexity exceeds a certain threshold (e.g., \( N_{\IR}>3 \)), the AP values of both the proposed framework and the baselines fluctuate within a limited range. That is, the sensing performance on AP may approach a level close to its lower limit when \( N_{\IR}>3 \). This can be attributed to AP's greater sensitivity than MSE, as AP inherently reflects the balance between precision and recall, which is particularly challenging to achieve in highly complex scenarios.

In terms of MSE for sensing, the proposed framework achieves up to a 90.6\% accuracy improvement compared to U-netGAN \cite{hu2024towards}, the best-performing method among the baselines, when the scenario $\SC_{3}$ is most complex, i.e. $N_{\IR}=8$. Meanwhile, compared to RadioUNet \cite{levie2021radiounet} for construction, our integration yields a 44.3\% accuracy improvement, which is also a noteworthy result. In contrast, when the complexity is extremely low, the advantage diminishes, and the proposed framework may even underperform compared to construction-focused baselines in \ac{thz} radio map construction.


\section{Conclusions}
\label{sec:conc} 
In this paper, we have explored communication-assisted sensing in \ac{thz} \ac{isac}, where limited \ac{rss} measurements are utilized for obstacle sensing and \ac{thz} radio map construction. To address the challenges of enhancing generalization ability and balancing accurate sensing with construction, we have formulated an integrated construction and sensing problem. To solve this problem with improved generalization and mutual benefit, we propose a \ac{cgan}-based framework coupled with a voting-based sensing scheme. Simulation results demonstrate that the proposed framework offers significant integrated benefits across various generalization scenarios, with MSE reductions of up to 44.3\% and 90.6\% for the construction and sensing tasks, respectively. The framework also achieves notable improvements in AP, indicating better precision-recall balance in sensing. As more THz radio maps from different beam directions are exploited, the environmental semantics extracted by the proposed voting scheme become increasingly accurate, achieving up to a 87.3\% accuracy improvement in MSE for sensing, which in turn leads to a 27.5\% improvement for reconstruction due to integration.


\appendices

\section{Proof of Proposition  1}
\label{Annex:a}
Based on the majority voting strategy, one can obtain 
\begin{equation}
\begin{aligned}
\relax[\hat{\SF}]_{i,j}=\sign\left(\sum_{d=1}^{N_{\dR}}[\hat{\SF}_{\theta_{d}}]_{i,j}-\frac{N_{\dR}}{2}\right),
 \label{voting}
\end{aligned}
\end{equation}  where $N_{\dR}$ is an odd number for convenience. The error probability of the final result $[\hat{\SF}]_{i,j}$ of the function $f_{\OCC}$ is 
\begin{equation}
\begin{aligned}
P\left([\hat{\SF}]_{i,j}\neq [\SF]_{i,j} \right)=&P\Big(N_{\corr} \leq \frac{N_{\dR}}{2}\Big)=P\Big(N_{\corr} \leq \lfloor\frac{N_{\dR}}{2}\rfloor\Big)\\
=&\sum_{c=0}^{\lfloor\frac{N_{\dR}}{2}\rfloor}\dbinom{N_{\dR}}{c}\left(1-\epsilon\right)^{c}\epsilon^{N_{\dR}-c},
 \label{ensemble_error}
\end{aligned}
\end{equation}where $N_{\corr}=\sum_{d=1}^{N_{\dR}}\sign\left(1-\left|[\SF]_{i,j}-[\hat{\SF}_{\theta_{d}}]_{i,j}\right|\right)$ is the number of correctly estimated votes. From the concentration form of Hoeffding's inequality \cite{hoeffding1994probability}, one can obtain 
\begin{equation}
\begin{aligned}
P\left(\frac{N_{\corr}}{N_{\dR}}-\left(1-\epsilon\right)\leq-\delta\right)\leq e^{-2 N_{\dR} \delta^{2}},
 \label{Hoeffding}
\end{aligned}
\end{equation} where $\delta>0$. Substituting $\delta=\frac{1}{2}-\epsilon$ into (\ref{Hoeffding}), then (\ref{ensemble_error}) can be written as
\begin{equation}
\begin{aligned}
P\left([\hat{\SF}]_{i,j}\neq [\SF]_{i,j} \right)\leq e^{-2 N_{\dR} \left(\frac{1}{2}-\epsilon\right)^{2}}.
 \label{ensemble_inequality_proof}
\end{aligned}
\end{equation}

\ifCLASSOPTIONcaptionsoff
  \newpage
\fi

\bibliographystyle{IEEEtran}
\bibliography{IEEEabrv,ref}

\begin{thebibliography}{10}
\providecommand{\url}[1]{#1}
\csname url@samestyle\endcsname
\providecommand{\newblock}{\relax}
\providecommand{\bibinfo}[2]{#2}
\providecommand{\BIBentrySTDinterwordspacing}{\spaceskip=0pt\relax}
\providecommand{\BIBentryALTinterwordstretchfactor}{4}
\providecommand{\BIBentryALTinterwordspacing}{\spaceskip=\fontdimen2\font plus
\BIBentryALTinterwordstretchfactor\fontdimen3\font minus \fontdimen4\font\relax}
\providecommand{\BIBforeignlanguage}[2]{{%
\expandafter\ifx\csname l@#1\endcsname\relax
\typeout{** WARNING: IEEEtran.bst: No hyphenation pattern has been}%
\typeout{** loaded for the language `#1'. Using the pattern for}%
\typeout{** the default language instead.}%
\else
\language=\csname l@#1\endcsname
\fi
#2}}
\providecommand{\BIBdecl}{\relax}
\BIBdecl

\bibitem{hu2024towards}
T.~Hu, Y.~Xie, S.~Wang, B.~Ning, L.~Li, and Z.~Chen, ``{Towards THz-based Obstacle Sensing: A Generative Radio Environment Awareness Framework},'' in \emph{2024 IEEE Glob. Commun. Conf. (GLOBECOM)}.\hskip 1em plus 0.5em minus 0.4em\relax IEEE, 2024, pp. 1--6.

\bibitem{ning2023beamforming}
B.~Ning, Z.~Tian, W.~Mei, Z.~Chen, C.~Han, S.~Li, J.~Yuan, and R.~Zhang, ``{Beamforming technologies for ultra-massive MIMO in terahertz communications},'' \emph{IEEE Open. J. Comm Soc.}, vol.~4, pp. 614--658, 2023.

\bibitem{9737357}
F.~Liu, Y.~Cui, C.~Masouros, J.~Xu, T.~X. Han, Y.~C. Eldar, and S.~Buzzi, ``{Integrated Sensing and Communications: Toward Dual-Functional Wireless Networks for 6G and Beyond},'' \emph{IEEE J. Sel. Areas Commun.}, vol.~40, no.~6, pp. 1728--1767, 2022.

\bibitem{10286447}
N.~T. Nguyen, M.~Ma, O.~Lavi, N.~Shlezinger, Y.~C. Eldar, A.~L. Swindlehurst, and M.~Juntti, ``{Deep Unfolding Hybrid Beamforming Designs for THz Massive MIMO Systems},'' \emph{IEEE Trans. Signal Process.}, vol.~71, pp. 3788--3804, 2023.

\bibitem{10833779}
K.~Meng, C.~Masouros, K.-K. Wong, A.~P. Petropulu, and L.~Hanzo, ``Integrated sensing and communication meets smart propagation engineering: Opportunities and challenges,'' \emph{IEEE Netw.}, pp. 1--1, 2025.

\bibitem{10243495}
S.~P. Chepuri, N.~Shlezinger, F.~Liu, G.~C. Alexandropoulos, S.~Buzzi, and Y.~C. Eldar, ``{Integrated Sensing and Communications With Reconfigurable Intelligent Surfaces: From signal modeling to processing},'' \emph{IEEE Signal Process. Mag.}, vol.~40, no.~6, pp. 41--62, 2023.

\bibitem{chen2021coverage}
W.~Chen, L.~Li, Z.~Chen, and T.~Q. Quek, ``{Coverage modeling and analysis for outdoor THz networks with blockage and molecular absorption},'' \emph{IEEE Wireless Commun. Lett.}, vol.~10, no.~5, pp. 1028--1031, 2021.

\bibitem{pang2023cellular}
X.~Pang, W.~Mei, N.~Zhao, and R.~Zhang, ``{Cellular Sensing via Cooperative Intelligent Reflecting Surfaces},'' \emph{IEEE Trans. Veh. Technol}, pp. 1--6, 2023.

\bibitem{charan2021vision}
G.~Charan, M.~Alrabeiah, and A.~Alkhateeb, ``{Vision-aided 6G wireless communications: Blockage prediction and proactive handoff},'' \emph{IEEE Trans. Veh. Technol}, vol.~70, no.~10, pp. 10\,193--10\,208, 2021.

\bibitem{9771340}
J.~Tan and L.~Dai, ``{THz Precoding for 6G: Challenges, Solutions, and Opportunities},'' \emph{IEEE Wireless Commun.}, vol.~30, no.~4, pp. 132--138, 2023.

\bibitem{9140329}
M.~Di~Renzo, A.~Zappone, M.~Debbah, M.-S. Alouini, C.~Yuen, J.~de~Rosny, and S.~Tretyakov, ``{Smart Radio Environments Empowered by Reconfigurable Intelligent Surfaces: How It Works, State of Research, and The Road Ahead},'' \emph{IEEE J. Sel. Areas Commun.}, vol.~38, no.~11, pp. 2450--2525, 2020.

\bibitem{10430216}
Y.~Zeng, J.~Chen, J.~Xu, D.~Wu, X.~Xu, S.~Jin, X.~Gao, D.~Gesbert, S.~Cui, and R.~Zhang, ``{A Tutorial On Environment-Aware Communications via Channel Knowledge Map for 6G},'' \emph{IEEE Commun. Surv. Tutor.}, p. Early Access, 2024.

\bibitem{9146281}
A.~C. Suarez~Rodriguez, N.~Haider, Y.~He, and E.~Dutkiewicz, ``{Network Optimisation in 5G Networks: A Radio Environment Map Approach},'' \emph{IEEE Trans. Veh. Technol}, vol.~69, no.~10, pp. 12\,043--12\,057, 2020.

\bibitem{9954187}
W.~B. Chikha, M.~Masson, Z.~Altman, and S.~B. Jemaa, ``{Radio Environment Map Based Inter-Cell Interference Coordination for Massive-MIMO Systems},'' \emph{IEEE Trans. Mob. Comput.}, vol.~23, no.~1, pp. 785--796, 2024.

\bibitem{wang2024generativeairfsensing}
L.~Wang, C.~Zhang, Q.~Zhao, H.~Zou, S.~Lasaulce, G.~Valenzise, Z.~He, and M.~Debbah, ``{Generative AI for RF Sensing in IoT systems},'' \emph{arXiv preprint arXiv:2407.07506}, 2024.

\bibitem{10626572}
T.~Hu, Y.~Xie, S.~Wang, L.~Li, and Z.~Chen, ``Directional terahertz radio map-assisted federated obstacle sensing,'' in \emph{2024 IEEE Int. Conf. Acoust., Speech Signal Process. Workshops (ICASSPW)}, 2024, pp. 710--714.

\bibitem{8269065}
J.~Chen, O.~Esrafilian, D.~Gesbert, and U.~Mitra, ``Efficient algorithms for air-to-ground channel reconstruction in uav-aided communications,'' in \emph{2017 2021 IEEE Globecom Workshops (GC Wkshps)}.\hskip 1em plus 0.5em minus 0.4em\relax IEEE, 2017, pp. 1--6.

\bibitem{9500337}
W.~Liu and J.~Chen, ``{Geography-aware Radio Map Reconstruction for UAV-aided Communications and Localization},'' in \emph{2021 IEEE Int. Conf. Commun. (ICC)}.\hskip 1em plus 0.5em minus 0.4em\relax IEEE, 2021, pp. 1--6.

\bibitem{9154223}
B.~Zhang and J.~Chen, ``{Constructing Radio Maps for UAV Communications via Dynamic Resolution Virtual Obstacle Maps},'' in \emph{2020 IEEE Int. Workshop Signal Process. Adv. Wireless Commun. (SPAWC)}.\hskip 1em plus 0.5em minus 0.4em\relax IEEE, 2020, pp. 1--5.

\bibitem{10041012}
W.~Liu and J.~Chen, ``{UAV-Aided Radio Map Construction Exploiting Environment Semantics},'' \emph{IEEE Trans. Wireless Commun.}, vol.~22, no.~9, pp. 6341--6355, 2023.

\bibitem{zeng2022uav}
P.~Zeng and J.~Chen, ``{UAV-aided Joint Radio Map and 3D Environment Reconstruction using Deep Learning Approaches},'' in \emph{2022 IEEE Int. Conf. Commun. (ICC)}.\hskip 1em plus 0.5em minus 0.4em\relax IEEE, 2022, pp. 5341--5346.

\bibitem{10437033}
W.~Chen and J.~Chen, ``{Blockage-Aware Radio Map Construction via Exploiting the Diffraction and Obstruction Structure},'' in \emph{2023 IEEE Glob. Commun. Conf. (GLOBECOM)}.\hskip 1em plus 0.5em minus 0.4em\relax IEEE, 2023, pp. 1920--1925.

\bibitem{hu20233d}
T.~Hu and Y.~Huang, ``{3D Radio Map Reconstruction Based on Generative Adversarial Networks Under Constrained Aircraft Trajectories},'' \emph{IEEE Trans. Veh. Technol}, vol.~72, no.~6, pp. 8250--8255, 2023.

\bibitem{levie2021radiounet}
R.~Levie and {\c{C}a{\u{g}kan and Kutyniok, Gitta and Caire, Giuseppe}}.~Yapar, ``{RadioUNet: Fast radio map estimation with convolutional neural networks},'' \emph{IEEE Trans. Wireless Commun.}, vol.~20, no.~6, pp. 4001--4015, 2021.

\bibitem{10130091}
S.~Zhang, A.~Wijesinghe, and Z.~Ding, ``{RME-GAN: A Learning Framework for Radio Map Estimation Based on Conditional Generative Adversarial Network},'' \emph{IEEE Internet Things J.}, vol.~10, no.~20, pp. 18\,016--18\,027, 2023.

\bibitem{9523765}
Y.~Teganya and D.~Romero, ``{Deep Completion Autoencoders for Radio Map Estimation},'' \emph{IEEE Trans. Wireless Commun.}, vol.~21, no.~3, pp. 1710--1724, 2022.

\bibitem{9449031}
Y.~Wu, F.~Lemic, C.~Han, and Z.~Chen, ``{A Non-Uniform Multi-Wideband OFDM System for Terahertz Joint Communication and Sensing},'' in \emph{2021 IEEE Veh. Technol. Conf. (VTC2021-Spring)}, 2021, pp. 1--5.

\bibitem{9967989}
Y.~Wu, F.~Lemic, and C.~Han, ``{Sensing Integrated DFT-Spread OFDM Waveform and Deep Learning-Powered Receiver Design for Terahertz Integrated Sensing and Communication Systems},'' \emph{IEEE Trans. Commun.}, vol.~71, no.~1, pp. 595--610, 2023.

\bibitem{10061469}
Y.~Wu, C.~Han, and Z.~Chen, ``{DFT-Spread Orthogonal Time Frequency Space System With Superimposed Pilots for Terahertz Integrated Sensing and Communication},'' \emph{IEEE Trans. Wireless Commun.}, vol.~22, no.~11, pp. 7361--7376, 2023.

\bibitem{9729746}
B.~Chang, W.~Tang, X.~Yan, X.~Tong, and Z.~Chen, ``{Integrated Scheduling of Sensing, Communication, and Control for mmWave/THz Communications in Cellular Connected UAV Networks},'' \emph{IEEE J. Sel. Areas Commun.}, vol.~40, no.~7, pp. 2103--2113, 2022.

\bibitem{9838764}
C.~Chaccour, W.~Saad, O.~Semiari, M.~Bennis, and P.~Popovski, ``Joint sensing and communication for situational awareness in wireless thz systems,'' in \emph{2022 IEEE Int. Conf. Commun. (ICC)}.\hskip 1em plus 0.5em minus 0.4em\relax IEEE, 2022, pp. 3772--3777.

\bibitem{goodfellow2014generative}
I.~Goodfellow, J.~Pouget-Abadie, M.~Mirza, B.~Xu, D.~Warde-Farley, S.~Ozair, A.~Courville, and Y.~Bengio, ``{Generative adversarial nets},'' in \emph{Proc. Neural Inf. Process. Syst. (NIPS)}, 2014, pp. 2672--2680.

\bibitem{isola2017image}
P.~Isola, J.-Y. Zhu, T.~Zhou, and A.~A. Efros, ``{Image-to-image translation with conditional adversarial networks},'' in \emph{Proc. IEEE/CVF Conf. Comput. Vis. Pattern Recognit. (CVPR)}, Jul. 2017, pp. 1125--1134.

\bibitem{ronneberger2015u}
O.~Ronneberger, P.~Fischer, and T.~Brox, ``{U-net: Convolutional networks for biomedical image segmentation},'' in \emph{2015 Med. Image Comput. Comput.-Assist. Interv. (MICCAI)}.\hskip 1em plus 0.5em minus 0.4em\relax Springer, 2015, pp. 234--241.

\bibitem{Alkhateeb2019}
A.~Alkhateeb, ``{DeepMIMO}: A generic deep learning dataset for millimeter wave and massive {MIMO} applications,'' in \emph{Proc. Inf. Theory Appl. Workshop (ITA)}, San Diego, CA, Feb 2019, pp. 1--8.

\bibitem{grinstead2012introduction}
C.~M. Grinstead and J.~L. Snell, \emph{Introduction to probability}.\hskip 1em plus 0.5em minus 0.4em\relax American Mathematical Soc., 2006.

\bibitem{mescheder2019occupancy}
L.~Mescheder, M.~Oechsle, M.~Niemeyer, S.~Nowozin, and A.~Geiger, ``{Occupancy networks: Learning 3d reconstruction in function space},'' in \emph{Proc. IEEE/CVF Conf. Comput. Vis. Pattern Recognit. (CVPR)}, 2019, pp. 4460--4470.

\bibitem{9514889}
H.~Sarieddeen, M.-S. Alouini, and T.~Y. Al-Naffouri, ``An overview of signal processing techniques for terahertz communications,'' \emph{Proc. IEEE}, vol. 109, no.~10, pp. 1628--1665, 2021.

\bibitem{8761205}
S.~Ju, S.~H.~A. Shah, M.~A. Javed, J.~Li, G.~Palteru, J.~Robin, Y.~Xing, O.~Kanhere, and T.~S. Rappaport, ``Scattering mechanisms and modeling for terahertz wireless communications,'' in \emph{2019 IEEE Int. Conf. Commun. (ICC)}.\hskip 1em plus 0.5em minus 0.4em\relax IEEE, 2019, pp. 1--7.

\bibitem{remcom}
\BIBentryALTinterwordspacing
Remcom. {Wireless InSite}. [Online]. Available: \url{https://www.remcom.com/wireless-insite-em-propagation-software.}
\BIBentrySTDinterwordspacing

\bibitem{9693274}
S.~Shrestha, X.~Fu, and M.~Hong, ``{Deep Spectrum Cartography: Completing Radio Map Tensors Using Learned Neural Models},'' \emph{IEEE Trans. Signal Process.}, vol.~70, pp. 1170--1184, 2022.

\bibitem{goodfellow2016deep}
I.~Goodfellow, Y.~Bengio, and A.~Courville, \emph{{Deep learning}}.\hskip 1em plus 0.5em minus 0.4em\relax MIT press, 2016.

\bibitem{zhou2012ensemble}
Z.~Zhou, \emph{Ensemble methods: foundations and algorithms}.\hskip 1em plus 0.5em minus 0.4em\relax CRC press, 2012.

\bibitem{gulrajani2017improved}
I.~Gulrajani, F.~Ahmed, M.~Arjovsky, V.~Dumoulin, and A.~C. Courville, ``{Improved training of wasserstein gans},'' in \emph{Proc. Neural Inf. Process. Syst. (NIPS)}, 2017, pp. 5767--5777.

\bibitem{padilla2020survey}
R.~Padilla, S.~L. Netto, and E.~A. Da~Silva, ``{A survey on performance metrics for object-detection algorithms},'' in \emph{2020 Int. Conf. Syst., Signals Image Process. (IWSSIP)}.\hskip 1em plus 0.5em minus 0.4em\relax IEEE, 2020, pp. 237--242.

\bibitem{hoeffding1994probability}
W.~Hoeffding, ``Probability inequalities for sums of bounded random variables,'' \emph{The collected works of Wassily Hoeffding}, pp. 409--426, 1994.

\end{thebibliography}

\end{document}